\definecolor{DarkBlue}{rgb}{0.1,0.1,0.5}
\definecolor{DarkGreen}{rgb}{0.1,0.5,0.1}
\newcommand{\extra}[1]{}
\newtheorem{theorem}{Theorem}
\newtheorem{definition}{Definition}
\newtheorem{lemma}{Lemma}
\def\squareforqed{\hbox{\rlap{$\sqcap$}$\sqcup$}}
\def\qed{\ifmmode\squareforqed\else{\unskip\nobreak\hfil
\penalty50\hskip1em\null\nobreak\hfil\squareforqed
\parfillskip=0pt\finalhyphendemerits=0\endgraf}\fi}
\def\endenv{\ifmmode\;\else{\unskip\nobreak\hfil
\penalty50\hskip1em\null\nobreak\hfil\;
\parfillskip=0pt\finalhyphendemerits=0\endgraf}\fi}
\renewenvironment{proof}{\noindent \textbf{{Proof~} }}{\qed\medskip}
\newenvironment{proof+}[1]{\noindent \textbf{{Proof #1~} }}{\qed\medskip}
\newenvironment{remark}{\noindent \textit{{Remark.~}}}{\qed}
\mathchardef\ordinarycolon\mathcode`\:
\def\vcentcolon{\mathrel{\mathop\ordinarycolon}}
\DeclareMathOperator*{\argmin}{arg\,min}
\DeclareMathOperator*{\argmax}{arg\,max}
\newcommand{\EF}{\sf{EF}}
\newcommand{\Eval}{{\sf Eval}}
\newcommand{\Cut}{{\sf Cut}}
\newcommand{\U}{\mathcal{U}}
\title{\bfseries Approximation Algorithms for  Envy-Free Cake Division with Connected Pieces}
\author{Siddharth Barman\thanks{Indian Institute of Science. {\tt barman@iisc.ac.in}} \and Pooja Kulkarni\thanks{University of Illinois at Urbana-Champaign. {\tt poojark2@illinois.edu}} }
\date{}
\begin{document}
\maketitle
\begin{abstract}
Cake cutting is a classic model for studying fair division of a heterogeneous, divisible resource among agents with individual preferences. Addressing cake division under a typical requirement that each agent must receive a connected piece of the cake, we develop approximation algorithms for finding envy-free (fair) cake divisions. In particular, this work improves the state-of-the-art additive approximation bound for this fundamental problem. Our results hold for general cake division instances in which the agents' valuations satisfy basic assumptions and are normalized (to have value $1$ for the cake). Furthermore, the developed algorithms execute in polynomial time under the standard Robertson-Webb query model. 

Prior work has shown that one can efficiently compute a cake division (with connected pieces) in which the additive envy of any agent is at most $1/3$. An efficient algorithm is also known for finding connected cake divisions that are (almost) $1/2$-multiplicatively envy-free. Improving the additive approximation guarantee and maintaining the multiplicative one, we develop a polynomial-time algorithm that computes a connected cake division that is both $\left(\frac{1}{4} +o(1) \right)$-additively envy-free and $\left(\frac{1}{2} - o(1) \right)$-multiplicatively envy-free. Our algorithm is based on the ideas of interval growing and envy-cycle-elimination. 

In addition, we study cake division instances in which the number of distinct valuations across the agents is parametrically bounded. We show that such cake division instances admit a fully polynomial-time approximation scheme for connected envy-free cake division. 
\end{abstract}

\section{Introduction}
Cake cutting is an exemplar of fair division literature \cite{brams1996fair,robertson1998cake,procaccia2015cake}. Since the foundational work of Steinhaus, Banach, and Knaster \cite{S48problem}, fair cake division has been extensively studied over decades, and it continues to inspire research, including algorithmic breakthroughs \cite{aziz2016discrete}, deep mathematical connections \cite{JPZ21,panina2021envy}, and applicable variants \cite{hosseini2020fair}. This fair-division model captures resource-allocation domains in which a divisible, heterogeneous resource (metaphorically, the cake) needs to be fairly divided among agents with individual, distinct preferences. For instance, cake division has been studied in the context of border negotiations \cite{brams1996fair} and fair electricity division \cite{baghel2022fair}. The predominant fairness notion of envy-freeness was also defined in the cake-division context \cite{foley1967resource}. This solution concept deems a cake division to be fair if each agent values the piece assigned to her over that of any other agent, i.e., if the agents are not envious of each other.

Formally, the cake is modeled as the interval $[0,1]$ and the cardinal preferences of the $n$ participating agents (over pieces of the cake) are expressed via valuation functions $v_1, \ldots, v_n$; in particular, $v_i(I) \in \mathbb{R}_+$ denotes the value that agent $i$ has for any interval $I \subseteq [0,1]$. In this work, we address cake division under the requirement that every agent must receive a connected piece (i.e., an interval) of the cake. That is, our goal is to partition the cake $[0,1]$ into exactly $n$ pairwise-disjoint intervals and assign them among the $n$ agents. This connectivity requirement is standard in literature and is motivated by practical settings wherein each agent must receive a contiguous part of the resource; consider, for instance, division of land or non-preemptive scheduling. Hence, in this setup, an envy-free (i.e., fair) division corresponds to a partition of $[0,1]$ into $n$ pairwise-disjoint intervals, $I_1, I_2, \ldots, I_n$, such that assigning each interval $I_i$ to agent $i \in [n]$ results in no envy, i.e., $v_i(I_i) \geq v_i(I_j)$, for all agents $i, j \in [n]$. 

The significance of envy-freeness is elevated by universal existential guarantees: under benign assumptions on agents' valuations, an envy-free cake division, in which each agent receives a connected piece, is guaranteed to exist \cite{stromquist1980cut,simmons1980private,edward1999rental}. Here, the elegant proof of Su~\cite{edward1999rental} is considered a foundational result across all of fair division. These strong existential results, however, do not have an algorithmic counterpart. Stromquist \cite{stromquist2008envy} has shown that even a finite-time algorithm does not exist for computing an envy-free cake division with connected pieces; this negative result holds in a model where the valuations are specified via an (adversarial) oracle. Furthermore, it is known that, under ordinal preferences, achieving envy-freeness with connected pieces is {\rm PPAD}-hard \cite{deng2012algorithmic}. 

These algorithmic barriers in route to finding exact envy-free cake divisions necessitate the study of approximation guarantees. The current paper contributes to this research thread by developing algorithms for finding connected cake divisions that are approximately envy-free. In particular, this work improves the state-of-the-art additive approximation bound for this fundamental fair division problem.

\paragraph{Our Results and Techniques.} Our algorithmic results hold for general cake division instances in which the agents' valuations satisfy basic assumptions and are normalized, such that the value for the entire cake for every agent is equal to one, i.e., $v_i([0,1])=1$ for all agents $i \in [n]$. Furthermore, the developed algorithms execute in the standard Robertson-Webb query model \cite{procaccia2015cake}.   

We address both additive and multiplicative approximations of envy-freeness. Specifically, for parameter $\varepsilon \in (0,1)$, a connected cake division $I_1,\ldots I_n$ (in which interval $I_i$ is assigned to agent $i \in [n]$) is said to be $\varepsilon$-envy-free ($\varepsilon$-$\EF$) iff no agent has more than $\varepsilon$ envy towards any other agent, i.e., $v_i(I_i) \geq v_i(I_j) - \varepsilon$ for all agents $i, j \in [n]$. Analogously, an $\alpha$-multiplicatively envy-free ($\alpha$-mult-$\EF$) cake division $I_1,\ldots I_n$ is one in which the envy is multiplicatively bounded within a factor of $\alpha$, i.e., $v_i(I_i) \geq \alpha v_i(I_j)$ for all agents $i, j \in [n]$; here parameter $\alpha \in (0,1]$.  

Our main result is a polynomial-time algorithm that computes a cake division (with connected pieces) that is simultaneously $\left(\frac{1}{4} + c \right)$-$\EF$ and $\left(\frac{1}{2} - c' \right)$-mult-$\EF$ (Theorems \ref{theorem:add-ef} and \ref{theorem:mult-ef}); here, $c$ and $c'$ are polynomially small (in $n$) terms. {For instance, our algorithm can be used to efficiently find a cake division that is $0.251$-$\EF$ and $0.499$-mult-$\EF$.}  

Our result improves upon the previously best known additive approximation guarantee. Specifically, prior work of Goldberg et al.~\cite{goldberg2020contiguous} provides an efficient algorithm for computing a $\frac{1}{3}$-$\EF$ cake division (with connected pieces); here, the computed allocation can leave some agents with no cake allocated to them and, hence, incur unbounded multiplicative envy. On the multiplicative front, for a lower order term $\kappa$, Arunachaleswaran et al.~\cite{arunachaleswaran2019fair} obtain a $\left(\frac{1}{2} - \kappa \right)$-mult-$\EF$ guarantee, in conjunction with an additive envy bound close to $\frac{1}{3}$. Therefore, for envy-free cake division, we improve the additive approximation guarantee from $\frac{1}{3}$ to (almost) $\frac{1}{4}$, while maintaining the best known multiplicative one.  

Our algorithm extends the interval-growing method of \cite{arunachaleswaran2019fair} with the idea of bifurcating intervals (see Definition \ref{defn:bifurcating}). Such intervals satisfy the property that if an agent $i$ receives an interval that is bifurcating with respect to $v_i$, then irrespective of how the rest of the cake is assigned, agent $i$'s envy towards others remains bounded. For the algorithm's design and analysis, we modify each agent's valuation to have a preference for bifurcating intervals. With these modified valuations, we build upon the idea of interval growing. In particular, we first obtain an allocation of (pairwise disjoint) intervals that might partially cover the entire cake, though induce bounded envy among the agents and against the unassigned intervals. Then, we use an envy-cycle-elimination idea to further allocate small pieces till at most $n$ unassigned intervals remain. Envy-graphs and the cycle-elimination method have been extensively utilized in fair division; see, e.g., \cite{lipton2004approximately}. However, their use for \emph{contiguous} cake cutting (i.e., division under the contiguity requirement) is novel. We employ cycle elimination in such a way that envy remains bounded as we allocate more and more of the cake. Finally, we have $n$ assigned and at most $n$ unassigned intervals. We pair up adjacent  assigned and unassigned intervals to overall obtain a complete partition of the cake that has bounded envy; see Section \ref{subsec:algorithm-description} for a  detailed description of the algorithm. 

It is relevant to note the technical distinctions between the algorithm of Arunachaleswaran et al.~\cite{arunachaleswaran2019fair} and the current one. In contrast to the prior work, the current algorithm executes with a novel modification of the valuations (to incorporate preferences towards bifurcating intervals). Moreover, the current analysis is more involved; in particular, the analysis requires multiple new lemmas and consideration of intricate cases (see, e.g., Lemmas \ref{lemma:hat-monotone} to \ref{lemma:bifurcating-phase-two} and the case analysis in the proof of Theorem \ref{theorem:add-ef}).

Our second result addresses cake division instances in which the number of distinct valuations is bounded. Specifically, we consider instances in which, for a parameter $\varepsilon \in (0,1)$ and across the $n$ agents, the number of \emph{distinct} valuations is at most $(\varepsilon n-1)$. For such instances with bounded heterogeneity, we provide an algorithm that computes $\varepsilon$-$\EF$ allocations in time polynomial in $n$ and $\frac{1}{\varepsilon}$ (Theorem \ref{theorem:bounded-het}). Note that such settings naturally generalize the case of identical valuations. Fair division algorithms under identical valuations have been developed in many contexts (beyond cake division). Our result shows that, under this natural generalization, a strong additive approximation guarantee can be obtained for connected envy-free cake division; see Section \ref{sec:boundedhet}.\footnote{We also detail at the end of the Section \ref{sec:boundedhet} that achieving multiplicative approximation bounds for envy under bounded heterogeneity is as hard as it is in the general case.} 

\paragraph{Additional Related Work.} Prior works in (connected) cake division have also studied improved approximation guarantees for specific valuation classes. For instance, it is shown in \cite{barman2021fair} that a connected cake division with arbitrarily small envy can be computed efficiently if the agents' value densities satisfy the monotone likelihood ratios property. Another studied valuation class is that of single-block valuations; in particular,  these correspond to valuations in which the agents have a constant density over some (agent-specific) interval of cake and zero everywhere else. The work of Alijani et al.~\cite{alijani2017envy} provides an efficient algorithm for finding (exact) envy-free cake division under single-block valuations that satisfy an ordering property. For arbitrary single-block valuations (without the ordering property), Goldberg et al.~\cite{goldberg2020contiguous} obtain a $\frac{1}{4}$-$\EF$ guarantee. They also obtain {\rm NP}-hardness results for connected envy-free cake division under additional constraints, such as conforming to a given cut point.  

Focussing on query complexity, Br{\^a}nzei and Nisan \cite{branzei2017query} show that an $\varepsilon$-$\EF$ cake division (with connected pieces) can be computed in a query efficient manner but the algorithm runs in time exponential in $n$ and $\frac{1}{\varepsilon}$. By contrast, we develop polynomial-time algorithms.

The study of approximation guarantees---to bypass computational or existential barriers---is an established research paradigm in theoretical computer science. For instance, in discrete fair division, (multiplicative) approximation bounds for the maximin share has received significant attention in recent years; see \cite{amanatidis2022fair} and multiple references therein. Also, in algorithmic game theory, approximation guarantees for Nash equilibria in two-player games have been extensively studied; see, e.g., \cite{daskalakis2007progress,tsaknakis2007optimization,kontogiannis2006polynomial}. Our work contributes to this thematic thread with a focus on cake division. \\

\noindent
\emph{Non-Contiguous Cake Division.} Algorithmic aspects of (exact) envy-free cake division remain challenging even without the connectivity requirement. In fact, the existence of a finite-time algorithm for noncontiguous envy-free cake division remained open until the work of Brams and Taylor \cite{brams1995envy}. For noncontiguous envy-free cake divisions, an explicit runtime bound---albeit a hyper-exponential one---was obtained in the notable work of Aziz and Mackenzie \cite{aziz2016discrete}. Prior works have also addressed non-contiguous envy-free cake division for special valuation classes: \cite{alijani2017envy} obtains a polynomial-time algorithm for finding (exact, but not necessarily contiguous) envy-free cake divisions under single-block valuations. Also, \cite{wang2019cake} develops a polynomial-time algorithm for computing non-contiguous envy-free cake divisions under single-peaked preferences. For the non-contiguous setting, \cite{lipton2004approximately} provides a fully polynomial-time approximation scheme for computing approximate envy-free cake divisions.
\section{Notation and Preliminaries}
\label{section:notation}

We consider fair division of a divisible, heterogeneous good---i.e., a cake---among $n$ agents. The cake is modeled as the interval $[0,1]$, and the cardinal preferences of the agents $i \in [n]$ over the cake are expressed via valuation functions $v_i$. In particular, $v_i(I) \in \mathbb{R}_+$ denotes the valuation that agent $i \in [n]$ has for any interval $I = [x, y] \subseteq [0,1]$; here, $0 \leq x \leq y \leq 1$. As in prior works (see, e.g.,~\cite{procaccia2015cake}), we will address valuations $\{v_i\}_{i=1}^n$ that are (i) nonnegative: $v_i(I) \geq 0$ for all intervals $I \subseteq [0,1]$, (ii) normalized: $v_i([0,1]) = 1$ for all agents $i$, (iii) divisible: for any interval $[x,y] \subseteq [0,1]$ and scalar $\lambda \in [0,1]$, there exists a point $z \in [x,y]$ with the property that $v_i([x,z]) = \lambda v_i([x,y])$, and (iv) additive: $v_i(I \cup J) = v_i(I) + v_i(J)$ for any pair of disjoint intervals $I, J \subseteq [0,1]$. 

These properties ensure that for all agents $i \in [n]$ and each interval $I\subseteq [0,1]$ we have $0 \leq v_i(I) \leq 1$. Also, note that, since the valuations $v_i$ are divisible, they are non-atomic: $v_i([x,x])=0$ for all points $x \in [0,1]$. Relying on this property, we will throughout regard, as a convention, two intervals to be disjoint even if they intersect exactly at an endpoint. Our algorithms  efficiently execute in the standard Robertson-Webb query model \cite{robertson1998cake}, that provides access to the agents' valuations via the following queries: \\

\noindent
(i) Evaluation queries, ${\Eval}$$_i(x,y)$: Given points $0 \leq x \leq y \leq 1$, the oracle returns the value that agent $i$ has for the interval $[x,y]$, i.e., returns $v_i([x,y])$. \\
\noindent 
(ii) Cut queries, ${\Cut}$$_i(x, \nu)$: Given an initial point $x \in [0,1]$ and a value $\nu \in (0,1)$, the oracle returns the leftmost point $y \in [x,1]$ with the property that $v_i([x,y]) \geq \nu$. If no such $y$ exists, the response to the query is $1$.

\paragraph{Allocations.} The current work address fair cake division under the requirement that each agent must receive a connected piece. That is, we focus solely on assigning to each agent a single sub-interval of $[0,1]$. Specifically, in a cake division instance with $n$ agents, an allocation is defined as an $n$-tuple of pairwise-disjoint intervals, $\mathcal{I} = (I_1, I_2, \ldots, I_n)$, where interval $I_i$ is assigned to agent $i \in [n]$ and $\bigcup_{i=1}^n I_i = [0,1]$.  In addition, we will use the term partial allocation to refer to an $n$-tuple of pairwise-disjoint intervals $\mathcal{P} = (P_1, \ldots, P_n)$ that do not necessarily cover the entire cake, $\cup_{i \in [n]} P_i \subsetneq [0,1]$. Note that, in an allocation $\mathcal{J} = (J_1, J_2, \ldots, J_n)$, partial or complete, each interval $J_i$ is indexed to identify the agent $i$ that owns the interval, and not how the intervals are ordered within $[0,1]$. 

Furthermore, for a partial allocation $\mathcal{P}=(P_1, \ldots, P_n)$, write $\U_\mathcal{P}= \left\{U_1, \ldots, U_t \right\}$ to denote the collection of unassigned intervals that remain after the assigned ones (i.e., $P_i$s) are removed from $[0,1]$. Formally, $\U_\mathcal{P} = \left\{U_1, \ldots, U_t \right\}$ is the minimum-cardinality collection of disjoint intervals that satisfy $\bigcup_{i} U_i = \left[0,1\right] \setminus \left( \bigcup_{j=1}^n P_j \right)$. 
 
 \paragraph{Approximate Envy-Freeness.} The fairness notions studied in this work are defined next. An allocation $\mathcal{E} = (E_1, \ldots, E_n)$ is said to be envy free ($\EF$) iff each agent prefers the interval assigned to her over that of any other agent, $v_i(E_i) \geq v_i(E_j)$ for all agents $i, j \in [n]$. This paper addresses both additive and multiplicative approximations of envy freeness. 

\begin{definition}[$\varepsilon$-$\EF$]
In a cake division instance with $n$ agents and for a parameter $\varepsilon \in (0,1)$, an (partial) allocation $\mathcal{I} = (I_1, I_2, \ldots, I_n)$ is said to be $\varepsilon$-additively envy-free ($\varepsilon$-$\EF$) iff $v_i(I_i) \geq v_i(I_j) - \varepsilon$, for all agents $i, j \in [n]$. 
\end{definition}

\begin{definition}[$\alpha$-mult-$\EF$]
For a parameter $\alpha \in (0,1)$, an (partial) allocation $\mathcal{I} = (I_1, I_2, \ldots, I_n)$  is said to be $\alpha$-multiplicatively envy-free ($\alpha$-mult-$\EF$) iff, for all agents $i,j \in [n]$, we have $v_i(I_i) \geq  \alpha \ v_i(I_j)$.
\end{definition}
\section{Approximation Algorithm for Envy-Free Cake Division}
\label{sec:mainsec}

This section develops an algorithm for efficiently computing a cake division (with connected pieces) that is $\left( \frac{1}{4} + o(1) \right)$-$\EF$ and $\left(\frac{1}{2} - o(1) \right)$-mult-$\EF$. 

For the design of the algorithm, we will use the notion of bifurcating intervals. For an agent $i$, a bifurcating interval $X$ satisfies the property that, if $i$ is assigned interval $X$, then one can divide the rest of the cake in any way and still agent $i$ will have at most $1/4$ envy towards any other agent. Formally,
\begin{definition}[Bifurcating Intervals] \label{defn:bifurcating}
An interval $[x, y] \subseteq [0,1]$ is said to be a bifurcating interval for an agent $i\in [n]$ iff
\begin{align*}
v_i([x,y]) \geq \frac{1}{4}, \qquad v_i([0, x]) \leq \frac{1}{2}, \qquad \text{and} \quad v_i([y,1]) \leq \frac{1}{2}.
\end{align*}
\end{definition}
For each agent $i \in [n]$, we extend the valuation $v_i$ to a function $\widehat{v}_i$ which codifies a preference towards bifurcating intervals. Formally, for each agent $i \in [n]$ and any interval $X \subseteq [0,1]$, define
\begin{align}
\widehat{v}_i(X) \coloneqq \begin{cases}
		1 \ & \text{  if } X \text{ is bifurcating for } i. \\
		 v_i(X) \ & \text{ if } X \text{ is not bifurcating for } i.
	 \end{cases} \label{defn:hat-v}
\end{align}
We note that, in contrast to the valuation $v_i$, the function $\widehat{v}_i$ is not additive.\footnote{Also, the function $\widehat{v}_i$ is not divisible.} However, analogous to $v_i$, the function $\widehat{v}_i$ is monotonic, normalized, and nonnegative. In addition, given access to $\Eval_i()$ queries, we can efficiently compute $\widehat{v}_i(X)$ for any interval $X \subseteq [0,1]$. We will show in the analysis that the algorithm's steps involving $\widehat{v}_i$s can be implemented efficiently, given Robertson-Webb query access to the underlying valuations. The claim below provides a bound on the value of non-bifurcating intervals. 
\begin{restatable}{claim}{ClaimNB}
\label{claim:non-bifurcating}
For any agent $i \in [n]$, if $Y \subseteq [0,1]$ is \emph{not} a bifurcating interval, then, $\widehat{v}_i(Y) = v_i(Y) < \frac{1}{2}$.
\end{restatable}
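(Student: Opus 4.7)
The plan is to proceed by contrapositive: I will show that if $Y = [x,y]$ has $v_i(Y) \geq \tfrac{1}{2}$, then $Y$ must in fact be a bifurcating interval for agent $i$, which is enough to conclude the claim (the equality $\widehat{v}_i(Y)=v_i(Y)$ for non-bifurcating $Y$ is immediate from the definition in \eqref{defn:hat-v}).

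First I would use normalization and additivity of $v_i$ to write
\begin{equation*}
v_i([0,x]) + v_i([x,y]) + v_i([y,1]) \;=\; v_i([0,1]) \;=\; 1,
\end{equation*}
so that, assuming $v_i([x,y]) \geq \tfrac{1}{2}$, we get $v_i([0,x]) + v_i([y,1]) \leq \tfrac{1}{2}$. Since $v_i$ is nonnegative, this forces both $v_i([0,x]) \leq \tfrac{1}{2}$ and $v_i([y,1]) \leq \tfrac{1}{2}$. Combined with the trivial bound $v_i([x,y]) \geq \tfrac{1}{2} \geq \tfrac{1}{4}$, all three conditions of Definition \ref{defn:bifurcating} are met, so $Y$ is bifurcating.

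Taking the contrapositive, if $Y$ is not bifurcating then $v_i(Y) < \tfrac{1}{2}$. Since the ``not bifurcating'' case of \eqref{defn:hat-v} sets $\widehat{v}_i(Y) = v_i(Y)$, the claim follows. There is essentially no obstacle here; the proof is a one-line application of additivity plus normalization, and I would simply keep track of the strict versus non-strict inequality (the conclusion $v_i(Y) < \tfrac{1}{2}$ is strict because $v_i(Y) = \tfrac{1}{2}$ still lets both side pieces be $\leq \tfrac{1}{2}$, making $Y$ bifurcating).
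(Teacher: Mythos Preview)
Your proof is correct and follows essentially the same approach as the paper: both argue by contrapositive that any interval with $v_i(Y) \geq \tfrac{1}{2}$ is automatically bifurcating (via normalization and additivity), and then read off $\widehat{v}_i(Y) = v_i(Y)$ from the definition. Your version simply spells out the additivity step in more detail.
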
 
\begin{proof}
All intervals $H \subseteq [0,1]$ of value $v_i(H) \geq \frac{1}{2}$ are bifurcating; see Definition \ref{defn:bifurcating} and recall that the agents' valuations are normalized, $v_i([0,1]) = 1$. Hence, for any non-bifurcating interval $Y \subseteq [0,1]$ we have $v_i(Y) < \frac{1}{2}$, i.e., $\widehat{v}_i(Y) = v_i(Y) < \frac{1}{2}$ (see equation (\ref{defn:hat-v})). 
\end{proof}

We will also use the construct of an envy-graph. Specifically, for a partial allocation $\mathcal{P}=(P_1, \ldots, P_n)$, an envy-graph $G_\mathcal{P}$ is a directed graph over $n$ vertices. Here, the vertices represent the $n$ agents and a directed edge, from vertex $i$ to vertex $j$, is included in the graph iff 
$\widehat{v}_i (P_i) < \widehat{v}_i(P_j)$. Envy-graphs and the cycle-elimination algorithm (detailed next) has been extensively utilized in discrete fair division; see, e.g., \cite{lipton2004approximately}. However, their use for cake cutting is novel.   

We will next show that, if for any partial allocation $\mathcal{P}=(P_1, \ldots, P_n)$, the envy-graph $G_\mathcal{P}$ contains a cycle, then we can in fact resolve the cycle---by reassigning the intervals---and eventually obtain a partial allocation $\mathcal{Q} = (Q_1, \ldots, Q_n)$ whose envy-graph $G_\mathcal{Q}$ is acyclic.\footnote{Here, the reassignment of the intervals implies that there exists a permutation $\pi \in \mathbb{S}_n$ such that $Q_i = P_{\pi(i)}$ for all agents $i$.}
\begin{restatable}{lemma}{lemCycleElimination} \label{lemma:cycle-elimination}
Given any partial allocation $\mathcal{P}= (P_1, \ldots, P_n)$, one can reassign the intervals $P_i$s among the agents and efficiently find another partial allocation $\mathcal{Q} = (Q_1, \ldots, Q_n)$ with the properties that
\begin{itemize}
\item[(i)] The envy-graph $G_\mathcal{Q}$ is acyclic.
\item[(ii)] The value $\widehat{v_i}(Q_i) \geq \widehat{v}_i(P_i)$, for all agents $i \in [n]$.
\end{itemize}
\end{restatable}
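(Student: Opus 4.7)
The plan is to iterate the standard envy-cycle elimination procedure, adapted to the (possibly non-additive) functions $\widehat{v}_i$. Starting from $\mathcal{P}$, I would repeatedly detect a directed cycle $i_1 \to i_2 \to \cdots \to i_k \to i_1$ in the current envy-graph and resolve it by shifting intervals along the cycle: reassign each agent $i_j$ the interval previously held by $i_{j+1}$ (indices taken mod $k$), leaving off-cycle agents untouched. By the very definition of an edge in the envy-graph, $\widehat{v}_{i_j}(P_{i_j}) < \widehat{v}_{i_j}(P_{i_{j+1}})$ for every $j$, so each agent on the cycle strictly improves in $\widehat{v}$-value while off-cycle agents keep the same interval. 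Property (ii) is therefore preserved as an invariant through every shift, and hence holds for the final allocation $\mathcal{Q}$. Termination with an acyclic $G_\mathcal{Q}$ then gives property (i).

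For the efficiency of a single step, cycle detection can be performed in $O(n^2)$ time via depth-first search, and the shift itself is a single pass over the cycle. The envy-graph is then re-evaluated from the pairwise comparisons of the values $\widehat{v}_i(P_j)$; each such value can be computed from $v_i(P_j)$, $v_i([0,x])$, and $v_i([y,1])$ using a constant number of $\Eval$ queries together with the bifurcation criteria of Definition \ref{defn:bifurcating}. In particular, the $n^2$ numbers $\widehat{v}_i(P_j)$ can be precomputed once and never change, since the underlying multiset of intervals $\{P_1,\ldots,P_n\}$ is invariant under cycle-shifts.

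For polynomial termination I would use the following potential argument. Since the intervals are only permuted, each agent $i$ always holds one of the $n$ original intervals, so $\widehat{v}_i(\text{current bundle of } i)$ lies in the fixed finite set $\{\widehat{v}_i(P_1),\ldots,\widehat{v}_i(P_n)\}$ of size at most $n$. Every time $i$ lies on a resolved cycle, this value strictly increases, so $i$ participates in at most $n-1$ shifts. Since every directed cycle in a loopless digraph has length at least two, summing over agents bounds the total number of cycle-resolutions by $n(n-1)/2 = O(n^2)$.

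The main subtlety, rather than a serious obstacle, is that $\widehat{v}_i$ is neither additive nor divisible (see the remark after (\ref{defn:hat-v})), so the usual cake-division manipulations are unavailable. The cycle-elimination argument above sidesteps this because it only relies on pairwise comparisons of $\widehat{v}_i$ on a fixed family of intervals; monotonicity and nonnegativity of $\widehat{v}_i$ are all that is needed. Combining acyclicity of $G_\mathcal{Q}$ at termination with the $\widehat{v}_i$-improvement invariant then yields both conclusions of the lemma.
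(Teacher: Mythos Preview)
Your proposal is correct and follows essentially the same cycle-elimination approach as the paper: detect a directed cycle in the envy-graph, shift intervals along it so that every on-cycle agent strictly improves in $\widehat{v}$-value, and repeat until the graph is acyclic. The only minor difference is the termination potential---the paper argues that the total number of edges in the envy-graph strictly decreases after each shift (since each on-cycle agent's out-degree drops by at least one while off-cycle out-degrees are unchanged), whereas you bound the number of cycle participations per agent; both yield an $O(n^2)$ bound on the number of iterations.
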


The proof of Lemma \ref{lemma:cycle-elimination} is standard and, for completeness, is provided in Appendix \ref{app:mainsec}. \\

Recall that, for any partial allocation $\mathcal{P}=(P_1, \ldots, P_n)$, the set $\U_\mathcal{P}  
= \left\{U_1, \ldots, U_t \right\}$ denotes the collection of unassigned intervals that remain after the intervals $P_i$s are removed from $[0,1]$. Also, note that for any partial allocation $\mathcal{P}= (P_1, \ldots, P_n)$, we have $|\mathcal{U}_\mathcal{P}| \leq n+1$.

\subsection{Interval Growing and Cycle Elimination}\label{subsec:algorithm-description}
\begin{algorithm}[h]
\caption{Approximation Algorithm for Connected Cake Division} \label{alg:quat-ef}
\textbf{Input:} A cake division instance with oracle access to the valuations $\{v_i\}_{i=1}^n$ of the $n$ agents and a fixed constant $\delta \in (0,1)$.  \\ 
\textbf{Output:} A complete allocation $(I_1,\ldots, I_n)$. 
\begin{algorithmic}[1]
				\STATE Initialize partial allocation $\mathcal{P}=(P_1, \ldots, P_n)=(\emptyset, \ldots, \emptyset)$ and $\mathcal{U}_\mathcal{P} = \left\{ [0,1] \right\}$.  
				\WHILE{there exists  an unassigned interval ${U} =[\ell, r] \in \mathcal{U}_{\mathcal{P}}$ and an agent $i \in [n]$ such that $\widehat{v}_i (U) \geq \widehat{v}_i(P_i) + \frac{\delta}{n}$} \label{step:while-loop}
				\STATE \label{step:candidates} Let $C \coloneqq \left\{ i \in [n] \ : \widehat{v}_i ({U}) \geq \widehat{v}_i(P_i) + \frac{\delta}{n} \right\}$ and, for every agent $i \in C$, set $r_i \in [{\ell}, {r}] $ to be the leftmost point such that $\widehat{v}_i ([\ell, r_i]) \geq \widehat{v}_i(P_i) + \frac{\delta}{n}$.
				\STATE \label{step:update} Select agent $a \in \argmin_{i \in C} \  r_i$ and update the partial allocation $\mathcal{P}$: assign $P_{a}  \leftarrow [\ell, r_{a}]$ and keep the interval assignment of all other agents unchanged.
				\STATE \label{step:updateU} Update $\mathcal{U}_{\mathcal{P}}$ to be the collection of intervals that are left unassigned under the current partial allocation $\mathcal{P}$.
				\ENDWHILE \label{step:while-ends}
		\WHILE{$|\mathcal{U}_\mathcal{P}| > n$} \label{step:begin-phase-two}
		\STATE \label{step:envycycle} Update $\mathcal{P}=(P_1,\ldots, P_n)$ following Lemma \ref{lemma:cycle-elimination} to ensure that the envy-graph $G_\mathcal{P}$ is acyclic. 
		\STATE \label{step:source-selection} Let $s \in [n]$ be a source vertex in the graph $G_\mathcal{P}$, with assigned interval $P_s = [\ell_s, r_s]$.
		\STATE \label{step:selectinterval} Let $\widetilde{U} =[r_s, \widetilde{r}]  \in \mathcal{U}_\mathcal{P}$ be the unassigned interval that is adjacent (on the right) to $P_s$. 
		\COMMENT{Since $|\mathcal{U}_\mathcal{P}| > n$, such an interval $\widetilde{U}$ is guaranteed to exist.}
		\STATE \label{step:crumb} Write $x \in [r_s, \widetilde{r}]$ to be the point with the property that $v_i([r_s, x]) \leq \frac{\delta}{n}$ for all agents $i$ and this inequality is tight for at least one agent. Append $P_s \leftarrow P_s \cup [r_s, x]$.	\\
		\COMMENT{If for all agents the value of $\widetilde{U}$ is at most $\frac{\delta}{n}$, then append $P_s \leftarrow P_s \cup \widetilde{U}$.}
		\ENDWHILE \label{step:end-phase-two}
		\STATE Index the unassigned intervals $U_j \in \mathcal{U}_\mathcal{P}$ such that each $U_j$ is adjacent to a \emph{distinct} interval $P_j$, for all $j$. \COMMENT{Since $|\mathcal{U}_\mathcal{P}| \leq n$, such an indexing is possible.}
		\STATE For all agents $i$, set interval $I_i = P_i \cup U_i$. \COMMENT{If an unassigned interval is not associated with $P_i$, then set $I_i = P_i$.} 
		\RETURN allocation $\mathcal{I} = (I_1, \ldots, I_n)$.				
\end{algorithmic}
\end{algorithm}

Our algorithm (Algorithm \ref{alg:quat-ef}) consists of two phases. In Phase ${\rm I}$ (Lines \ref{step:while-loop} to \ref{step:while-ends} in Algorithm \ref{alg:quat-ef}), which we call the interval growing phase, the algorithm starts with empty intervals---i.e., $P_i = \emptyset$ for all $i$---and iteratively grows these intervals while maintaining bounded envy among the agents. In particular, to extend a partial allocation $\mathcal{P}=(P_1,\ldots,P_n)$, we first judiciously select an unassigned interval ${U} \in \mathcal{U}_\mathcal{P}$ and then assign an inclusion-wise minimal sub-interval of $U$ to an agent $a$. The sub-interval of $U$ and agent $a$ are selected such that the function value, $\widehat{v}_a$, increases appropriately and, at the same time, the envy towards $a$ (from any other agents) remains bounded. Note that, in this phase, the cake might not be allocated completely, but the invariant of bounded envy is maintained throughout. Phase {\rm I} terminates with a partial allocation $\overline{\mathcal{P}}=(\overline{P}_1,\ldots, \overline{P}_n)$ under which each agent $i \in [n]$ has bounded envy towards the other agents and towards all the unassigned intervals $U \in \mathcal{U}_{\overline{\mathcal{P}}}$ (see Lemma \ref{lem:envyfreephase1}). 

At the end of Phase {\rm I}, it is possible that the number of unassigned intervals is $n+1$. The objective of Phase {\rm II} (Lines \ref{step:begin-phase-two} to \ref{step:end-phase-two} in the algorithm) is to reduce the number of unassigned intervals, while maintaining bounded envy between the agents and against the unassigned intervals. Towards this, we use the cycle-elimination method (Lemma \ref{lemma:cycle-elimination}) to first ensure that for the maintained partial allocation $\mathcal{P}$ the envy-graph $G_\mathcal{P}$ is acyclic. Now, given that the directed graph $G_\mathcal{P}$ is acyclic, it necessarily admits a source vertex $s \in [n]$, i.e., a vertex $s$ with no incoming edges. Furthermore, by the definition of the envy graph, we get that no agent has sufficiently high envy towards the source vertex $s \in [n]$. With this guarantee in hand, we enlarge the interval assigned to $s$ (i.e., enlarge $P_s$) while maintaining bounded envy overall. Specifically, we append to $P_s$ a piece of small enough value from the unassigned interval $\widetilde{U}$ adjacent to $P_s$. Since $|\mathcal{U}_\mathcal{P}|=n+1$, an unassigned interval, adjacent to $P_s$, is guaranteed to exist. Also, note that this extension ensures that $P_s$ continues to be a connected piece of the cake, i.e., agent $s$ continues to receive a single interval. Performing such updates, Phase {\rm II} efficiently finds a partial allocation $\mathcal{P}$ with the property that $|\mathcal{U}_\mathcal{P}| \leq n$. Since at the end of Phase {\rm II} the number of unassigned intervals is at most $n$, we can associate each unassigned interval $U \in \mathcal{U}_\mathcal{P}$ with a \emph{distinct} assigned interval $P_j$ that is adjacent to $U$. We merge each assigned interval $P_i$ with the associated and adjacent unassigned interval $U_i$ (if any) to obtain the interval $I_i$ for each agent $i \in [n]$. 
The intervals $I_1, I_2, \ldots, I_n$ completely partition the cake $[0,1]$ and constitute the returned allocation $\mathcal{I} = (I_1, I_2, \ldots, I_n)$. We will establish in Section \ref{section:analysis} that $\mathcal{I}$ satisfies the stated approximation guarantees for envy-freeness.  In Section \ref{section:runtime}, we will prove that the two phases run in polynomial time (Lemma \ref{lemma:time-complexity}) under the Robertson-Webb query model. 

\subsection{Runtime Analysis}
\label{section:runtime}
We begin by noting that, given Robertson-Webb query access to the underlying valuation $v_i$s, we can answer cut and evaluation queries for the functions $\widehat{v}_i$ (see equation (\ref{defn:hat-v})) in polynomial time. That is, given points $0 \leq x \leq y \leq 1$, we can find $\widehat{v}_i([x,y])$ in polynomial time. Also, given a point $x\in [0,1]$ and value $\nu \in [0,1]$, we can efficiently compute the leftmost point $y$ (if one exists) that satisfies $\widehat{v}_i([x,y]) \geq \nu$. The proof of this claim is deferred to Appendix \ref{app:mainsec}. 

\begin{restatable}{claim}{ClaimRWHatV}
\label{clm:rwhatvs}
For any agent $i \in [n]$, given Robertson-Webb query access to the valuation $v_i$, we can answer cut and evaluation queries with respect to the function $\widehat{v}_i$ in polynomial time.
\end{restatable}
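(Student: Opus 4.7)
The plan is to simulate each query on $\widehat{v}_i$ with $O(1)$ Robertson-Webb queries to $v_i$, using the explicit definition of $\widehat{v}_i$ in (\ref{defn:hat-v}) and the three-condition characterization of bifurcating intervals in Definition \ref{defn:bifurcating}.

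For evaluation queries on an interval $[x,y]$, I would make three $\Eval_i$ queries to obtain $v_i([0,x])$, $v_i([x,y])$, and $v_i([y,1])$, test the three bifurcating inequalities, and return $1$ if all of them hold and $v_i([x,y])$ otherwise. This is immediate from (\ref{defn:hat-v}).

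For cut queries with input $x \in [0,1]$ and threshold $\nu \in (0,1)$, the key observation is that $\widehat{v}_i([x,y]) \geq \nu$ holds iff either (a) $v_i([x,y]) \geq \nu$, or (b) $[x,y]$ is bifurcating. Therefore the leftmost valid $y$ equals $\min\{y_1, y_2\}$, where $y_1$ realizes (a) and $y_2$ realizes (b). The value $y_1$ is returned directly by $\Cut_i(x,\nu)$. To compute $y_2$, I would first query $v_i([0,x])$: if it exceeds $1/2$, the left bifurcating condition is violated for every choice of $y$, so no $y_2$ exists. Otherwise, the remaining two conditions, $v_i([x,y]) \geq 1/4$ and $v_i([y,1]) \leq 1/2$, are each monotone in $y$: using additivity and normalization of $v_i$, the quantity $v_i([x,y])$ is nondecreasing in $y$ and $v_i([y,1]) = 1 - v_i([0,y])$ is nonincreasing in $y$. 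Hence the set of $y$ satisfying both is upward-closed in $[x,1]$, with left endpoint $y_2 = \max\{\Cut_i(x, 1/4),\ \Cut_i(0, 1/2)\}$, which costs two cut queries to $v_i$. Returning $\min\{y_1,y_2\}$ completes the computation.

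The only real subtlety is establishing the upward-closure of the bifurcating-$y$ set, which follows cleanly from additivity of $v_i$; the boundary cases (no eligible $y$ in either disjunct) are automatically handled by the Robertson-Webb convention of returning $1$ when no valid point exists, so the minimum is still correct. Since both procedures use only a constant number of $\Eval_i$ and $\Cut_i$ queries, and each such query runs in polynomial time by assumption, the claim follows.
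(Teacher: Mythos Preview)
Your proposal is correct and matches the paper's proof essentially line for line: three $\Eval_i$ queries for evaluation, and for the cut query the same decomposition $y=\min\{y_1,y_2\}$ with $y_1=\Cut_i(x,\nu)$ and $y_2=\max\{\Cut_i(x,1/4),\Cut_i(0,1/2)\}$ after checking $v_i([0,x])\le 1/2$. Your explicit monotonicity/upward-closure justification is a nice addition that the paper leaves implicit.
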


We now prove that the algorithm executes in polynomial time.
\begin{lemma}
\label{lemma:time-complexity}
Given a fixed constant $\delta \in \left(0,\frac{1}{4}\right)$ and any cake division instance with (Robertson-Webb) query access to the valuations of the $n$ agents, Algorithm \ref{alg:quat-ef} computes an allocation in time that is polynomial in $n$ and $\frac{1}{\delta}$. 
\end{lemma}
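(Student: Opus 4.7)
The plan is to bound the number of iterations of the two main while-loops separately, show that every individual iteration runs in polynomial time, and combine the two. The key technical leverage for the per-iteration cost is Claim~\ref{clm:rwhatvs}: it lets us answer evaluation and cut queries for $\widehat{v}_i$ in polynomial time given Robertson-Webb access to $v_i$. Together with Lemma~\ref{lemma:cycle-elimination} (polynomial-time cycle elimination), this ensures that every primitive operation the algorithm performs---computing $\widehat{v}_i(U)$ and $\widehat{v}_i(P_i)$, locating $r_i$ in Line~\ref{step:candidates}, locating $x$ in Line~\ref{step:crumb}, finding a source of $G_\mathcal{P}$, and merging intervals---takes polynomially many queries.

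For Phase~I (Lines~\ref{step:while-loop}--\ref{step:while-ends}), I would track the potential $\Phi = \sum_{i=1}^n \widehat{v}_i(P_i)$. At each iteration, the chosen agent $a$ has its piece replaced by $[\ell, r_a]$ with $\widehat{v}_a([\ell, r_a]) \geq \widehat{v}_a(P_a) + \delta/n$, so $\Phi$ strictly increases by at least $\delta/n$. Since $\widehat{v}_i \in [0,1]$ for every $i$ (either the bifurcating value $1$ or $v_i \leq 1$), we have $\Phi \leq n$ throughout, so Phase~I terminates within at most $n^2/\delta$ iterations.

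The main obstacle is bounding Phase~II (Lines~\ref{step:begin-phase-two}--\ref{step:end-phase-two}), because an iteration in which only a small crumb $[r_s, x] \subsetneq \widetilde{U}$ is appended does not reduce $|\mathcal{U}_\mathcal{P}|$, so the obvious discrete progress measure stalls. To handle this, I would split Phase~II iterations into two types: type~(a), in which $\widetilde{U}$ is fully consumed and $|\mathcal{U}_\mathcal{P}|$ drops by one, and type~(b), in which the appended crumb has $v_j$-value exactly $\delta/n$ for some ``tight'' agent $j$. Since $|\mathcal{U}_\mathcal{P}| \leq n+1$ at entry to Phase~II (as noted before Section~\ref{subsec:algorithm-description}) and must reach $|\mathcal{U}_\mathcal{P}| \leq n$, the number of type~(a) iterations is at most one. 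For type~(b), I would observe that the appended crumbs, across the entire phase, are pairwise disjoint sub-intervals of $[0,1]$, because each crumb lies inside the currently unassigned region, which only shrinks over iterations (cycle elimination permutes which $P_i$ owns which interval but preserves the union $\bigcup_i P_i$). Charging each type~(b) iteration to one of its tight agents $j$, the disjointness forces the total $v_j$-measure of the crumbs charged to $j$ to be at most $v_j([0,1]) = 1$; since each such crumb contributes exactly $\delta/n$ to $v_j$, agent $j$ is charged at most $n/\delta$ times. Summing over the $n$ agents bounds the type~(b) iterations by $n^2/\delta$.

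Combining the two phases gives $O(n^2/\delta)$ iterations in total. Multiplying by the polynomial per-iteration cost, and adding the final linear-time pairing and merging of assigned intervals with their adjacent unassigned neighbors, yields the claimed runtime polynomial in $n$ and $1/\delta$.
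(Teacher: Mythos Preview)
Your proof is correct. Phase~I is handled exactly as in the paper: the potential $\sum_i \widehat{v}_i(P_i)$ rises by $\delta/n$ per iteration and is capped at $n$, giving at most $n^2/\delta$ iterations, and Claim~\ref{clm:rwhatvs} covers the per-iteration cost.

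For Phase~II your route differs in packaging from the paper's. The paper introduces a single potential $\varphi(\mathcal{P})=\sum_{i}\sum_{j} v_i(P_j)=\sum_i v_i\!\left(\bigcup_j P_j\right)$, notes that the assigned region $\bigcup_j P_j$ only grows, and observes that each iteration (except possibly the terminal one that fully consumes $\widetilde U$) adds a crumb worth $\delta/n$ to some $v_j$, hence raises $\varphi$ by at least $\delta/n$; since $\varphi\le n$, this also gives $n^2/\delta$ iterations. Your type~(a)/type~(b) split plus the per-agent charging argument is an equivalent unpacking of the same combinatorial fact---disjointness of the appended crumbs and a $\delta/n$ contribution from a tight agent per crumb---and lands on the identical bound. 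The potential formulation is a bit more compact; your charging version makes the role of the tight agent more explicit. Either way, the correctness and the polynomial bound are the same.
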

\begin{proof}
We will first establish the time complexity of Phase {\rm I} of the algorithm. Note that, in every iteration of this phase (i.e., in every iteration of the while-loop between Lines \ref{step:while-loop} and \ref{step:while-ends}), for some agent $a \in [n]$, the value $\widehat{v}_a(P_a)$ increases additively by at least $\frac{\delta}{n}$; see Lines \ref{step:candidates} and \ref{step:update}. Since the functions $\widehat{v}_i$s are monotonic and upper bounded by $1$, the first while-loop in the algorithm iterates at most $\frac{n^2}{\delta}$ times. We next show that each iteration of this while-loop can be implemented in polynomial time and, hence, obtain that overall Phase {\rm I} executes in polynomial time.  Note that the execution condition of the while-loop (Line \ref{step:while-loop}) can be evaluated efficiently, since the evaluation query under $\widehat{v}_i$s can be answered in polynomial time (Claim \ref{clm:rwhatvs}). Similarly, the candidate set $C$ in Line \ref{step:candidates} can be computed efficiently. Finding the points $r_i$s in Line \ref{step:candidates} entails answering cut queries for the functions $\widehat{v}_i$s and this too can be implemented efficiently (Claim \ref{clm:rwhatvs}). Therefore, all the steps in the while-loop can be implemented efficiently, and we get that Phase {\rm I} terminates in polynomial time. 

For Phase {\rm II} and each maintained partial allocation $\mathcal{P}=(P_1, \ldots, P_n)$, consider the potential $\varphi(\mathcal{P}) \coloneqq \sum_{i=1}^n \sum_{j=1}^n v_i(P_j)$. In each iteration of the second while-loop (Lines \ref{step:begin-phase-two} to \ref{step:end-phase-two}), the assigned region of the cake (i.e., $\cup_{i\in [n]} P_i$) monotonically increases. Indeed, while updating a partial allocation, the intervals might get reassigned among the agents, however, the union $\cup_{i\in [n]} P_i$ increases in each iteration of the second while-loop. Furthermore, in every iteration, for at least one agent $i$ and the selected interval $P_s$ (see Line \ref{step:crumb}), the value increases by $\frac{\delta}{n}$.\footnote{If the unassigned interval $\widetilde{U}$, considered in Line \ref{step:crumb}, is of value less than $\frac{\delta}{n}$ for all agents $i \in [n]$, then after that update the number of unassigned intervals (i.e., $|\mathcal{U}_\mathcal{P}|$) strictly  decreases. Hence, after such an update, the while-loop terminates.} Hence, in each iteration, the potential $\varphi$ 
increases by at least $\frac{\delta}{n}$. Also, note that the potential is upper bounded by $n$ and, hence, the second while-loop iterates at most $\frac{n^2}{\delta}$ times. Since all the steps in each iteration of the loop can be implemented in polynomial time---including the envy cycle elimination one (Lemma \ref{lemma:cycle-elimination})---we get that Phase {\rm II} itself executes in polynomial time. 

The final merging of the intervals takes linear time. This, overall, establishes the polynomial-time complexity of the algorithm.
\end{proof}

\subsection{Approximation Guarantee}
\label{section:analysis}

We first note a monotonicity property with respect to the function values, $\widehat{v}_i$s, satisfied during the execution of the algorithm. 
\begin{restatable}{lemma}{LemmaHatVMonotone}
\label{lemma:hat-monotone}
For any agent $i \in [n]$, the function values $\widehat{v}_i$ of the assigned intervals, $P_i$s, are nondecreasing through the execution of Algorithm \ref{alg:quat-ef}. 
\end{restatable}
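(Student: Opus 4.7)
The plan is to walk through every operation in Algorithm~\ref{alg:quat-ef} that can modify an assigned interval $P_i$ and verify in each case that $\widehat{v}_i(P_i)$ does not decrease. There are essentially three such operations: the interval-growing update in Phase~\textrm{I} (Line~\ref{step:update}), the envy-cycle elimination in Phase~\textrm{II} (Line~\ref{step:envycycle}), and the appending of a small piece to the source's interval in Phase~\textrm{II} (Line~\ref{step:crumb}).

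Phase~\textrm{I} is immediate: in any iteration, only the chosen agent $a$'s interval changes, and by construction $\widehat{v}_a(P_a^{\text{new}}) \geq \widehat{v}_a(P_a^{\text{old}}) + \tfrac{\delta}{n}$, while for every other agent $i \neq a$ the interval $P_i$, and hence the value $\widehat{v}_i(P_i)$, is unchanged. Envy-cycle elimination is equally direct: property~(ii) of Lemma~\ref{lemma:cycle-elimination} asserts exactly that $\widehat{v}_i(Q_i) \geq \widehat{v}_i(P_i)$ after the reassignment, so this step preserves monotonicity for every agent simultaneously.

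The only operation that requires a real argument is Line~\ref{step:crumb}, where we enlarge the source's interval $P_s$ by appending the right-adjacent sub-interval $[r_s, x]$. I plan to handle this by a short case analysis based on whether $P_s$ was bifurcating for $s$ before and after the extension. The key observation---which I expect to be the main (and only) obstacle---is that extending a bifurcating interval to the right preserves the bifurcation property. This follows directly from Definition~\ref{defn:bifurcating}: if $P_s^{\text{old}} = [\ell_s, r_s]$ is bifurcating, then writing $P_s^{\text{new}} = [\ell_s, x]$ with $x \geq r_s$, we have $v_s(P_s^{\text{new}}) \geq v_s(P_s^{\text{old}}) \geq \tfrac{1}{4}$, the left condition $v_s([0, \ell_s]) \leq \tfrac{1}{2}$ is unaffected, and the right condition improves since $v_s([x,1]) \leq v_s([r_s,1]) \leq \tfrac{1}{2}$.

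Given this observation, the case analysis is routine. If $P_s^{\text{old}}$ is bifurcating then so is $P_s^{\text{new}}$, and $\widehat{v}_s$ equals $1$ before and after. If neither is bifurcating, then both values agree with $v_s$ and monotonicity follows from additivity of $v_s$. Finally, if $P_s^{\text{old}}$ is not bifurcating but $P_s^{\text{new}}$ is, then Claim~\ref{claim:non-bifurcating} gives $\widehat{v}_s(P_s^{\text{old}}) = v_s(P_s^{\text{old}}) < \tfrac{1}{2} < 1 = \widehat{v}_s(P_s^{\text{new}})$. The remaining possibility---old bifurcating but new not---is ruled out by the key observation above. Since all other agents' intervals are untouched in this step, monotonicity is preserved for them trivially, completing the induction over the algorithm's iterations.
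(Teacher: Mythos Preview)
Your proposal is correct and follows essentially the same approach as the paper: both arguments step through the three operations (Phase~\textrm{I} update, envy-cycle elimination, and the crumb append in Line~\ref{step:crumb}) and verify monotonicity for each. The only difference is that the paper dispatches Line~\ref{step:crumb} in a single clause---implicitly relying on the earlier-asserted monotonicity of $\widehat{v}_i$ under interval inclusion---whereas you spell out the bifurcation case analysis explicitly; your version is thus more self-contained but not structurally different.
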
 
\begin{proof}
To establish the monotonicity under $\widehat{v}_i$ in Phase {\rm I}, consider any iteration for the first while-loop. Here, for the selected agent $a \in [n]$, the value of the assigned interval, under $\widehat{v}_a$, in fact increases and for all the other agents it continues to be the same. Hence, the lemma holds throughout Phase {\rm I}. The monotonicity is also maintained during the execution of Phase {\rm II}: the value under $\widehat{v}_i$ does not decrease in Line \ref{step:envycycle} (Lemma \ref{lemma:cycle-elimination}) or in Line \ref{step:crumb}. Therefore, the lemma stands proved. 
\end{proof}

Next, we  assert that, throughout the execution of the algorithm, the assigned intervals satisfy an inclusion-wise minimality property. Note that in the following lemma we evaluate agent $i$'s assigned interval under the function $\widehat{v}_i$ and evaluate the compared interval $X$ under the valuation $v_i$. 

\begin{restatable}{lemma}{LemmaMinInclusion}
\label{lemma:min-inclusion}
Let $\mathcal{P}'=(P'_1, \ldots, P'_n)$ be any partial allocation considered during the execution of Algorithm \ref{alg:quat-ef}. Then, for any two assigned intervals $P'_i$ and $P'_j=[\ell'_j, r'_j]$ along with any strict subset $X = [\ell'_j, x] \subsetneq P'_j$ (i.e., $x < r'_j$), we have  ${v}_i(X) < \widehat{v}_i(P'_i) + \frac{\delta}{n}$.
\end{restatable}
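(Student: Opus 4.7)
The plan is to prove Lemma~\ref{lemma:min-inclusion} by induction on the number of updates Algorithm~\ref{alg:quat-ef} performs to its partial allocation. The base case is the initial partial allocation $(\emptyset,\ldots,\emptyset)$, which has no non-trivial subsets, so the claim is vacuous. For the inductive step I handle the three kinds of update---a Phase~{\rm I} interval assignment (Lines~\ref{step:candidates}--\ref{step:update}), a Phase~{\rm II} cycle elimination (Line~\ref{step:envycycle}), and a Phase~{\rm II} crumb extension (Line~\ref{step:crumb})---separately. Throughout I rely on three standing tools: (a) monotonicity of $\widehat{v}_i$ under set inclusion (noted after~(\ref{defn:hat-v})), (b) the trajectory-wise monotonicity $\widehat{v}_i(P_i^{\mathrm{new}}) \ge \widehat{v}_i(P_i)$ provided by Lemma~\ref{lemma:hat-monotone}, and (c) the pointwise bound $v_i(Y) \le \widehat{v}_i(Y)$ that is immediate from~(\ref{defn:hat-v}).

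\textbf{Phase~{\rm I}.} When $U = [\ell, r]$ is processed and $P_a$ becomes $[\ell, r_a]$ for $a \in \argmin_{i \in C} r_i$, only pairs with $j = a$ are freshly introduced. For a strict subset $X = [\ell, x] \subsetneq [\ell, r_a]$ and any agent $i$: if $i \in C$, then by the choice of $a$ we have $x < r_a \le r_i$ and the ``leftmost'' definition of $r_i$ gives $\widehat{v}_i(X) < \widehat{v}_i(P_i) + \delta/n$; if $i \notin C$, the membership condition gives $\widehat{v}_i(U) < \widehat{v}_i(P_i) + \delta/n$ and monotonicity of $\widehat{v}_i$ yields $\widehat{v}_i(X) \le \widehat{v}_i(U)$. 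Chaining through $v_i(X) \le \widehat{v}_i(X)$ and Lemma~\ref{lemma:hat-monotone} gives $v_i(X) < \widehat{v}_i(P_i^{\mathrm{new}}) + \delta/n$. Pairs with $j \neq a$ leave $P_j$ unchanged and follow from the induction hypothesis combined with Lemma~\ref{lemma:hat-monotone}. For the Phase~{\rm II} cycle elimination, the intervals viewed as subsets of $[0,1]$ are preserved while the owner labels are permuted; applying the induction hypothesis to the previous owner of the subset $X$ and then using Lemma~\ref{lemma:cycle-elimination}(ii) transports the bound to the reassigned allocation.

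\textbf{Phase~{\rm II} extension.} Here $P_s = [\ell_s, r_s]$ grows to $P_s^{\mathrm{new}} = [\ell_s, x]$ with $v_i([r_s, x]) \le \delta/n$ for every $i$; only pairs with $j = s$ are new. For a strict subset $X = [\ell_s, y] \subsetneq P_s^{\mathrm{new}}$ I split on whether $y \le r_s$ or $r_s < y < x$. In the first regime $X \subseteq P_s$, and the bound follows either from the induction hypothesis (when $y < r_s$) or, for $y = r_s$, from the source property $\widehat{v}_i(P_s) \le \widehat{v}_i(P_i)$ combined with $v_i(P_s) \le \widehat{v}_i(P_s)$. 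In the second regime, additivity of $v_i$ gives $v_i(X) = v_i(P_s) + v_i([r_s, y]) \le \widehat{v}_i(P_i) + \delta/n$, again via the source property for the first summand and the crumb bound for the second.

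\textbf{Main obstacle.} I expect the most delicate step to be securing the \emph{strict} inequality in the Phase~{\rm II} extension case when $r_s < y < x$, since neither the crumb bound $v_i([r_s, y]) \le \delta/n$ nor the source inequality $\widehat{v}_i(P_s) \le \widehat{v}_i(P_i)$ is a priori strict. I plan to dispatch this with a short sub-case analysis on whether $P_s^{\mathrm{new}}$ is bifurcating for the relevant agent: in the bifurcating case $\widehat{v}_i(P_s^{\mathrm{new}}) = 1$ provides the slack (using that $v_i(P_s) < 1$, which is forced whenever the crumb carries positive $v_i$-value as in this sub-case); in the non-bifurcating case, the strict monotonicity of $\widehat{v}_i$ across the extension---$\widehat{v}_i(P_s^{\mathrm{new}}) = v_i(P_s) + v_i([r_s, x])$ strictly exceeds $v_i(P_s)$ for the tight agent---together with Lemma~\ref{lemma:hat-monotone} converts the weak chain into a strict one.
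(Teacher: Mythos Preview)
Your inductive skeleton---base case, Phase~{\rm I} assignment, cycle elimination, crumb extension---is exactly the structure the paper uses, and your arguments for Phase~{\rm I} and for cycle elimination are correct and match the paper's (you are in fact more explicit about the $i\in C$ versus $i\notin C$ split than the paper, which just appeals to ``the selection criterion'').

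Your ``Main obstacle'' paragraph, however, does not succeed. The quantity you must lower-bound is $\widehat{v}_i(P_i^{\mathrm{new}})$, and for every agent $i\neq s$ the crumb step leaves $P_i$ unchanged, so $\widehat{v}_i(P_i^{\mathrm{new}})=\widehat{v}_i(P_i)$; Lemma~\ref{lemma:hat-monotone} gives you nothing strict here. Your case split on whether $P_s^{\mathrm{new}}$ is bifurcating for $i$ is the wrong object: knowing $\widehat{v}_i(P_s^{\mathrm{new}})=1$ tells you nothing about $\widehat{v}_i(P_i)$, because the source inequality $\widehat{v}_i(P_s)\le\widehat{v}_i(P_i)$ was recorded \emph{before} the extension and concerns $P_s$, not $P_s^{\mathrm{new}}$. (Splitting instead on whether $P_s$ is bifurcating for $i$ does dispatch one sub-case cleanly---then $\widehat{v}_i(P_i)\ge\widehat{v}_i(P_s)=1$ and $v_i(X)\le 1<1+\delta/n$---but in the remaining sub-case you still cannot force strictness: nothing rules out $\widehat{v}_i(P_s)=\widehat{v}_i(P_i)$ together with $v_i([r_s,y])=v_i([r_s,x])=\delta/n$.)

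For calibration: the paper's own proof glosses over exactly this point. It derives only $v_i(P'_s)\le \widehat{v}_i(P'_i)+\delta/n$ and then asserts the strict conclusion for proper prefixes without further justification. The discrepancy is harmless downstream, because the only consumers (Lemmas~\ref{lem:envyfreephase1} and~\ref{lem:envyfreephase2}) use the lemma via a divisibility/contradiction argument that works equally well with the weak inequality (let the prefix approach $P'_j$ and pass to the limit). So you have not missed an idea the paper supplies; you have spotted a genuine soft spot and then proposed a patch that does not quite close it.
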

\begin{proof}
We establish the lemma via an inductive argument. Indeed, the initial partial allocation $(\emptyset, \ldots, \emptyset)$ satisfies the desired property. Now, consider any iteration of the first-while loop, and write $\mathcal{P}''=(P''_1, \ldots, P''_n)$ to be the partial allocation that gets updated (in this iteration) to $\mathcal{P}'=(P'_1, \ldots, P'_n)$. In particular, let $a$ be the agent selected in Line \ref{step:update}. Note that for all the other agents $i \neq a$, the assigned interval remains unchanged, $P'_i = P''_i$. Also, the induction hypothesis implies that $\mathcal{P}''$ satisfies the lemma. Hence, for all the agents $i, j \neq a$ (whose assigned intervals have not changed), the desired property continues to hold. Furthermore, agent $a$ receives an interval of higher function value, $\widehat{v}_a(P'_a) \geq \widehat{v}_a(P''_a) + {\delta}/{n}$. Hence, we have the lemma from agent $a$ against any other agent $j$. 

It remains to show that the lemma holds between $P'_i$ and $P'_a = [\ell'_a, r'_a]$. Assume, towards a contradiction, that there exists a strict subset $X = [{\ell}'_a, x] \subsetneq {P}'_a$ such that ${v}_i(X) \geq \widehat{v}_i(P'_i) + {\delta}/{n}$. Since $P'_i = P''_i$ and $\widehat{v}_i(X) \geq v_i(X)$, we obtain $\widehat{v}_i(X)  \geq \widehat{v}_i(P''_i) + {\delta}/{n}$. This, however, contradicts the selection criterion in Lines \ref{step:candidates} and \ref{step:update}. In particular, this bound implies $r_i < r_a$ (see Line \ref{step:candidates}) and, hence, $a$ would not be the selected agent in Line \ref{step:update}. Therefore, by way of contradiction, we have that the property holds with respect to $P'_a$ as well. 

The above-mentioned arguments prove that the lemma holds for all allocation considered in Phase {\rm I}. Next, we show that it continues to hold through Phase {\rm II}. 

Consider any iteration of the second while-loop, and write $\mathcal{P}''=(P''_1, \ldots, P''_n)$ to be the partial allocation that gets updated in this iteration. The induction hypothesis gives us that $\mathcal{P}''$ satisfies the desired property. In Line \ref{step:envycycle} the intervals are reassigned among the agents (i.e., the collection of intervals remains unchanged) and for each agent $i$, the value, under $\widehat{v}_i$, of the assigned interval does not decrease; see Lemma \ref{lemma:cycle-elimination}. Hence, the property continues to hold after Line \ref{step:envycycle}. For analyzing the rest of the iteration, let $s \in [n]$ denote the (source) agent that gets selected in Line \ref{step:source-selection} and $P'_s$ be the updated interval for agent $s$; in particular, interval $P'_s$ is obtained by appending a piece to $P''_s$. Since $s$ is the only agent whose interval got updated here, the lemma continues to hold between all other agents $i, j \neq s$. Also, the property is maintained from agent $s$'s perspective, since $\widehat{v}_s(P'_s) \geq \widehat{v}_s(P''_s)$. To complete the proof we will next show that the property is upheld between $P'_i$ and $P'_s$, for any $i \in [n]$. 

Note that, for agent $i \neq s$, the assigned interval remains unchanged during the current update, $P'_i = P''_i$. Furthermore, the fact that $s$ is a source vertex gives us
\begin{align}
\widehat{v}_i(P''_i)  \geq \widehat{v}_i(P''_s)  \geq v_i(P''_s)  \label{ineq:prime}
\end{align}
The extension of $P''_s$ to $P'_s$ (performed in Line \ref{step:crumb}) ensures that $v_i(P'_s) \leq v_i(P''_s) + \delta/n$. Hence, inequality (\ref{ineq:prime}) gives us $v_i(P'_s) \leq \widehat{v}_i(P'_i) + \delta/n$. That is, there does not exist an $X \subsetneq P'_s$ with the property that ${v}_i(X) \geq \widehat{v}_i(P'_i) + {\delta}/{n}$. This completes the proof. 
\end{proof}

The next lemma provides a bounded envy guarantee for the partial allocation $\overline{\mathcal{P}}=(\overline{P}_1,\ldots, \overline{P}_n)$ computed by Phase {\rm I}. Note that in this lemma, while considering envy from agent $i$ to agent $j$, we evaluate $\overline{P}_i$ with respect to $\widehat{v}_i$ and evaluate $\overline{P}_j$ under $v_i$.  

\begin{lemma}\label{lem:envyfreephase1}
Let $\overline{\mathcal{P}}=(\overline{P}_1, \ldots, \overline{P}_n)$ be the partial allocation maintained by Algorithm \ref{alg:quat-ef} at the end of Phase {\rm I} (i.e., at the termination of the first while-loop). Then, for all agents $i, j \in [n]$ and all unassigned intervals $U \in \U_{\overline{\mathcal{P}}}$, we have 
\begin{align*}
\widehat{v}_i(\overline{P}_i) \geq v_i(\overline{P}_j) - \frac{\delta}{n} \qquad \text{ and } \qquad 
\widehat{v}_i(\overline{P}_i) \geq v_i(U) - \frac{\delta}{n}.
\end{align*}
\end{lemma}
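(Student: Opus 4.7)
The plan is to read off each of the two inequalities from a tool already in hand: the envy bound against unassigned intervals comes from the failure of the while-loop's continuation condition, and the envy bound against other assigned intervals comes from Lemma \ref{lemma:min-inclusion} combined with the continuity of $v_i$ (which follows from divisibility and non-atomicity).

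For the unassigned-interval inequality, I would simply unpack what it means for Phase {\rm I} to have halted. The loop on Line \ref{step:while-loop} terminates precisely when no pair $(i, U)$ with $U \in \mathcal{U}_{\overline{\mathcal{P}}}$ satisfies $\widehat{v}_i(U) \geq \widehat{v}_i(\overline{P}_i) + \delta/n$, so for every such pair we have $\widehat{v}_i(U) < \widehat{v}_i(\overline{P}_i) + \delta/n$. Since $\widehat{v}_i(U)$ equals either $v_i(U)$ or $1$, and in either case $v_i(U) \leq \widehat{v}_i(U)$ (using the normalization $v_i([0,1]) = 1$), the desired bound $\widehat{v}_i(\overline{P}_i) \geq v_i(U) - \delta/n$ drops out.

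For the assigned-interval inequality I would invoke Lemma \ref{lemma:min-inclusion} applied to the terminal partial allocation $\overline{\mathcal{P}}$. If $\overline{P}_j$ is empty the bound is trivial since $\widehat{v}_i$ is nonnegative, so write $\overline{P}_j = [\ell_j, r_j]$; the lemma then asserts $v_i([\ell_j, x]) < \widehat{v}_i(\overline{P}_i) + \delta/n$ for every $x \in [\ell_j, r_j)$. The map $x \mapsto v_i([\ell_j, x])$ is monotone nondecreasing and continuous (the divisibility assumption supplies an intermediate-value property, and non-atomicity rules out jumps), so letting $x \to r_j^{-}$ yields $v_i(\overline{P}_j) \leq \widehat{v}_i(\overline{P}_i) + \delta/n$, as required.

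The only step that is not completely mechanical is this last passage to the limit: Lemma \ref{lemma:min-inclusion} is stated only for strict sub-intervals $X \subsetneq \overline{P}_j$, so one cannot directly plug in $X = \overline{P}_j$; weakening the strict inequality at the right endpoint into a non-strict bound on the whole interval is where the continuity of $v_i$ (a consequence of the paper's standing divisibility and non-atomicity assumptions) must be invoked explicitly. Nothing else in the argument looks delicate.
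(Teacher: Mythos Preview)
Your proof is correct and follows essentially the same approach as the paper: both derive the unassigned-interval bound from the termination condition of the while-loop and the assigned-interval bound from Lemma~\ref{lemma:min-inclusion}. The only cosmetic difference is that the paper handles the passage from strict subsets to the full interval $\overline{P}_j$ by contradiction using divisibility (if $v_i(\overline{P}_j) > \widehat{v}_i(\overline{P}_i) + \delta/n$, divisibility yields a strict sub-interval hitting the bound exactly, contradicting Lemma~\ref{lemma:min-inclusion}), whereas you take the limit $x \to r_j^-$ via continuity; these are equivalent moves.
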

\begin{proof}
Fix an arbitrary agent $i \in [n]$ and consider any unassigned interval $U \in \U_{\overline{\mathcal{P}}}$. The execution condition of the first while-loop ensures that at termination it holds that $\widehat{v}_i(\overline{P}_i) \geq \widehat{v}_i(U) - \frac{\delta}{n} \geq v_i(U) - \frac{\delta}{n}$; the last inequality directly follows from the definition of $\widehat{v}_i$. This establishes the desired inequalities with respect to the unassigned intervals. 

Next, for any assigned interval $\overline{P_j}=[\overline{\ell}_j, \overline{r}_j]$, assume, towards a contradiction, that $v_i(\overline{P}_j) > \widehat{v}_i(\overline{P}_i) + {\delta}/{n}$.  Since the valuation $v_i$ is divisible (see Section \ref{section:notation}),\footnote{Here, we invoke divisibility of $v_i$ with factor $\alpha = \frac{\widehat{v}_i(\overline{P}_i) + {\delta}/{n}}{v_i(\overline{P}_j)} \in (0,1)$. Also, note that, in contrast to $v_i$, the function $\widehat{v}_i$ is not divisible.}  there exists a strict subset $X = [\overline{\ell}_j, x] \subsetneq \overline{P}_j$ with the property that $v_i(X) = \widehat{v}_i(\overline{P}_i) + {\delta}/{n}$. This, however, contradicts Lemma \ref{lemma:min-inclusion} (instantiated with $\mathcal{P}'=\overline{\mathcal{P}}$). The lemma stands proved.  
\end{proof}

Next, we show that the bounded envy guarantee obtained at the end of Phase {\rm I} (as stated in Lemma \ref{lem:envyfreephase1}) continues to hold in Phase {\rm II}.

\begin{lemma}\label{lem:envyfreephase2}
Let $\mathcal{P}=(P_1, \ldots, P_n)$ be the partial allocation maintained by Algorithm \ref{alg:quat-ef} at the end of Phase {\rm II}.  Then, for all agents $i, j \in [n]$ and all unassigned intervals $U \in \U_{\mathcal{P}}$, we have 
\begin{align*}
\widehat{v}_i(P_i) \geq v_i(P_j) - \frac{\delta}{n} \qquad \text{ and } \qquad 
\widehat{v}_i(P_i) \geq v_i(U) - \frac{\delta}{n} .
\end{align*}
\end{lemma}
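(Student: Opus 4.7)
The plan is a direct induction on the iterations of the second while-loop in Algorithm \ref{alg:quat-ef}. The base case is furnished by Lemma \ref{lem:envyfreephase1}, since the partial allocation at the start of Phase {\rm II} is exactly $\overline{\mathcal{P}}$. For the inductive step, I would analyze one iteration as two substeps: (a) the cycle-elimination update in Line \ref{step:envycycle}, and (b) the source-selection and interval-extension update in Lines \ref{step:source-selection}--\ref{step:crumb}. The goal is to show that each substep preserves both inequalities of the lemma.

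Write $\mathcal{P}''$ for the partial allocation before an iteration and $\mathcal{Q}=(Q_1,\ldots,Q_n)$ for the one after Line \ref{step:envycycle}. By Lemma \ref{lemma:cycle-elimination} there is a permutation $\pi$ with $Q_i = P''_{\pi(i)}$ and $\widehat{v}_i(Q_i) \geq \widehat{v}_i(P''_i)$ for every $i$. Applying the inductive hypothesis at the pair $(i,\pi(j))$ gives $\widehat{v}_i(P''_i) \geq v_i(P''_{\pi(j)}) - \delta/n = v_i(Q_j) - \delta/n$, and combined with $\widehat{v}_i(Q_i) \geq \widehat{v}_i(P''_i)$ this yields the first bound for $\mathcal{Q}$. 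A reassignment does not alter the unassigned region, so the second bound transfers in exactly the same way.

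For the extension substep, let $\mathcal{P}'$ denote the allocation after Line \ref{step:crumb}; only agent $s$ is affected, with $P'_s = P''_s \cup [r_s, x]$ and $v_i([r_s, x]) \leq \delta/n$ for every $i$. For pairs $i,j \neq s$ the claim is immediate from the inductive hypothesis, and from $s$'s own perspective monotonicity of $\widehat{v}_s$ (Lemma \ref{lemma:hat-monotone}) combined with the inductive hypothesis gives $\widehat{v}_s(P'_s) \geq \widehat{v}_s(P''_s) \geq v_s(P''_j) - \delta/n = v_s(P'_j) - \delta/n$. I expect the main obstacle to be bounding envy \emph{towards} the extended interval $P'_s$ from some $i \neq s$, and this is exactly where the source property is used: since $s$ is a source of $G_{\mathcal{P}''}$, we have $\widehat{v}_i(P''_i) \geq \widehat{v}_i(P''_s) \geq v_i(P''_s)$, where the last step is the pointwise bound $\widehat{v}_i \geq v_i$ that follows straight from the definition in (\ref{defn:hat-v}). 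Hence $v_i(P'_s) \leq v_i(P''_s) + \delta/n \leq \widehat{v}_i(P''_i) + \delta/n = \widehat{v}_i(P'_i) + \delta/n$, giving the desired inequality with exactly the permitted slack.

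For the unassigned intervals under $\mathcal{P}'$, each $U \in \U_{\mathcal{P}'}$ either already lay in $\U_{\mathcal{P}''}$ or is a subinterval of the interval $\widetilde{U} \in \U_{\mathcal{P}''}$ adjacent to $P''_s$ on the right, and in both cases $v_i(U) \leq v_i(\widetilde{U})$ lets the inductive bound against $\widetilde{U}$ (or against $U$ itself) carry over. This argument also explains the ordering of the two operations inside the loop: without cycle elimination a source need not exist, and without the pointwise inequality $\widehat{v}_i \geq v_i$ the $\widehat{v}$-defined source property would not translate into the required $v_i$-bound on $P''_s$.
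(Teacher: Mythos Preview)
Your induction through Phase {\rm II} is correct and yields the lemma, but it is a genuinely different route from the paper's. The paper does \emph{not} induct on iterations here; instead it invokes Lemma \ref{lemma:min-inclusion} (already proved by a single induction spanning both phases) for the assigned-interval bound, and for the unassigned-interval bound it uses the containment $U\subseteq\overline{U}\in\U_{\overline{\mathcal{P}}}$ together with Lemma \ref{lemma:hat-monotone} and Lemma \ref{lem:envyfreephase1}. In effect, the paper has pushed the inductive work into the proof of Lemma \ref{lemma:min-inclusion} (whose Phase {\rm II} step is essentially your ``source'' argument), so that Lemmas \ref{lem:envyfreephase1} and \ref{lem:envyfreephase2} become short corollaries of that single structural lemma. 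Your approach is more self-contained for this lemma, while the paper's factoring reuses one inductive proof across both envy lemmas.

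One notational slip to fix: in your extension substep you consistently write $\mathcal{P}''$ where you mean $\mathcal{Q}$. You defined $\mathcal{P}''$ as the allocation \emph{before} cycle elimination and $\mathcal{Q}$ as the one after; the source $s$ is a source of $G_{\mathcal{Q}}$, not of $G_{\mathcal{P}''}$, and the extended interval is $P'_s = Q_s \cup [r_s,x]$. The argument is unaffected once the symbols are corrected.
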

\begin{proof}
Let $\overline{\mathcal{P}}=(\overline{P}_1, \ldots, \overline{P}_n)$ be the partial allocation maintained by Algorithm \ref{alg:quat-ef} at the end of Phase {\rm I}. Note that $\mathcal{U}_{\overline{P}}$ and $\mathcal{U}_\mathcal{P}$ denote the collection of unassigned intervals left at the end of Phase {\rm I} and Phase {\rm II}, respectively. We observe that, for all unassigned intervals $U \in \mathcal{U}_\mathcal{P}$, there exists an unassigned interval $\overline{U} \in \U_{\overline{\mathcal{P}}}$ such that $U \subseteq \overline{U}$. These containments follow from the fact that in each iteration of the second while-loop (i.e., in Phase {\rm II}) we either reassign the allocated intervals (which preserves the collection of the unassigned ones) or we enlarge a chosen assigned interval (Line \ref{step:crumb}); under such an enlargement, one of the unassigned intervals gets reduced and the others remain unchanged. 

Furthermore, Lemma \ref{lemma:hat-monotone} gives us $\widehat{v}_i(P_i) \geq \widehat{v}_i(\overline{P}_i) \geq v_i(\overline{U}) - \delta/n$, for any interval $\overline{U} \in \mathcal{U}_{\overline{P}}$; here, the last inequality follows from Lemma \ref{lem:envyfreephase1}. Using this bound and the above-mentioned containment of unassigned intervals we get $\widehat{v}_i(P_i) \geq v_i(U) - \delta/n $, for all $U \in \mathcal{U}_\mathcal{P}$. This establishes the desired inequalities with respect to the unassigned intervals. 

Next, for any assigned interval ${P_j}=[{\ell}_j, {r}_j]$, assume, towards a contradiction, that $v_i({P}_j) > \widehat{v}_i({P}_i) + {\delta}/{n}$.  Since the valuation $v_i$ is divisible,  there exists a strict subset $X = [{\ell}_j, x] \subsetneq {P}_j$ with the property that $v_i(X) = \widehat{v}_i({P}_i) + {\delta}/{n}$. This, however, contradicts Lemma \ref{lemma:min-inclusion} (instantiated with $\mathcal{P}'= {\mathcal{P}}$). The lemma stands proved.  
\end{proof}

The following lemma shows that, at the end of Phase {\rm I}, if an agent $i$ does not receive a bifurcating interval but some other agent $j$ does, then $j$'s interval cannot be bifurcating (for $i$) with an additional margin of $\frac{\delta}{n}$. Formally,\footnote{Note that, in contrast to Lemmas \ref{lem:envyfreephase1} and \ref{lem:envyfreephase2}, here we have an absolute bound on the value of the compared interval $\overline{P}_j$.} 
\begin{restatable}{lemma}{LemmaBifurcatingPone}
\label{lemma:bifurcating-phase-one}
Let $\overline{\mathcal{P}}=(\overline{P}_1, \ldots, \overline{P}_n)$ be the partial allocation maintained by Algorithm \ref{alg:quat-ef} at the end of Phase {\rm I}. If, for an agent $i\in [n]$, the assigned interval $\overline{P}_i$ is \emph{not} bifurcating (for $i$), but interval $\overline{P}_j = [\overline{\ell}_j, \overline{r}_j]$ is bifurcating (for $i$). Then, at least one of the following inequalities holds:  
\begin{align*}
v_i(\overline{P}_j) < \frac{1}{4} + \frac{\delta}{n} \qquad \text{ or } \quad v_i([\overline{r}_j, 1]) > \frac{1}{2} - \frac{\delta}{n}.
\end{align*}
\end{restatable}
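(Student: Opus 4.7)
The plan is to argue by contradiction: suppose both $v_i(\overline{P}_j) \geq \frac{1}{4} + \frac{\delta}{n}$ and $v_i([\overline{r}_j, 1]) \leq \frac{1}{2} - \frac{\delta}{n}$ hold simultaneously. Under this negation, I will first exhibit a strict sub-interval $X = [\overline{\ell}_j, z] \subsetneq \overline{P}_j$ that is bifurcating for agent $i$, and then appeal to the greedy selection rule of Line \ref{step:update} together with Lemma \ref{lemma:hat-monotone} to derive a contradiction.

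To construct $X$, note that $\overline{P}_j$ being bifurcating for $i$ already supplies $v_i([0, \overline{\ell}_j]) \leq \frac{1}{2}$, so it only remains to pick $z \in (\overline{\ell}_j, \overline{r}_j)$ with $v_i([\overline{\ell}_j, z]) \geq \frac{1}{4}$ and $v_i([z,1]) \leq \frac{1}{2}$. I will split on the value of $v_i([0, \overline{\ell}_j])$. In the case $v_i([0, \overline{\ell}_j]) \geq \frac{1}{4}$, I use divisibility of $v_i$ to pick $z$ with $v_i([\overline{\ell}_j, z]) = \frac{1}{4}$; this $z$ lies strictly inside $\overline{P}_j$ because $v_i(\overline{P}_j) > \frac{1}{4}$, and $v_i([z,1]) = 1 - v_i([0,\overline{\ell}_j]) - \frac{1}{4} \leq \frac{1}{2}$. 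In the complementary case $v_i([0, \overline{\ell}_j]) < \frac{1}{4}$, I pick $z$ with $v_i([z, 1]) = \frac{1}{2}$; since $v_i([\overline{r}_j, 1]) < \frac{1}{2}$ by the negated hypothesis, this $z$ lies to the left of $\overline{r}_j$, and $v_i([\overline{\ell}_j, z]) = \frac{1}{2} - v_i([0, \overline{\ell}_j]) > \frac{1}{4}$. In either case, $X = [\overline{\ell}_j, z]$ is a strict bifurcating sub-interval of $\overline{P}_j$ for agent $i$, so $\widehat{v}_i(X) = 1$.

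Now consider the iteration of the first while-loop in which agent $j$ last received the assignment $\overline{P}_j$; at that moment the processed unassigned interval $U$ has left endpoint $\overline{\ell}_j$, agent $a = j$ is selected with $r_a = \overline{r}_j$, and $X \subseteq \overline{P}_j \subseteq U$. Monotonicity of $\widehat{v}_i$ on nested intervals therefore gives $\widehat{v}_i(U) \geq \widehat{v}_i(X) = 1$. Writing $P_i^{\mathrm{then}}$ for agent $i$'s interval at that iteration, Lemma \ref{lemma:hat-monotone} yields $\widehat{v}_i(P_i^{\mathrm{then}}) \leq \widehat{v}_i(\overline{P}_i)$, and since $\overline{P}_i$ is non-bifurcating for $i$, Claim \ref{claim:non-bifurcating} gives $\widehat{v}_i(\overline{P}_i) = v_i(\overline{P}_i) < \frac{1}{2}$. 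Thus $\widehat{v}_i(U) - \widehat{v}_i(P_i^{\mathrm{then}}) > \frac{1}{2} > \frac{\delta}{n}$, placing $i$ in the candidate set $C$ at that iteration. The same monotonicity shows $\widehat{v}_i([\overline{\ell}_j, z]) = 1 \geq \widehat{v}_i(P_i^{\mathrm{then}}) + \frac{\delta}{n}$, so the leftmost witness $r_i$ defined in Line \ref{step:candidates} satisfies $r_i \leq z < \overline{r}_j = r_a$. This contradicts the choice of $a = j$ as an $\arg\min_{k \in C} r_k$ in Line \ref{step:update}, finishing the proof.

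The main obstacle I anticipate is the case split in constructing $X$: neither a ``left-equalizing'' nor a ``right-equalizing'' choice of $z$ works in isolation, and one has to verify all three defining inequalities of a bifurcating interval simultaneously, using the $\frac{\delta}{n}$ slack in the negation precisely where it is needed to guarantee $z < \overline{r}_j$. Once $X$ is in hand, the final step is a routine application of the greedy rule defining Phase I combined with the monotonicity of the function values $\widehat{v}_i$.
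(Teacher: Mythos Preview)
Your proof is correct and follows the same overall strategy as the paper: assume the negation, exhibit a strict sub-interval $X \subsetneq \overline{P}_j$ that is bifurcating for $i$, and then contradict the $\argmin$ selection in Line~\ref{step:update} using Lemma~\ref{lemma:hat-monotone} and Claim~\ref{claim:non-bifurcating}. The only difference is in the construction of $X$: rather than your case split on $v_i([0,\overline{\ell}_j])$, the paper takes the single choice $X = [\overline{\ell}_j, x]$ where $x$ is the leftmost point with $v_i([\overline{\ell}_j,x]) = v_i(\overline{P}_j) - \tfrac{\delta}{n}$; the slack $\tfrac{\delta}{n}$ in both negated inequalities then directly certifies all three bifurcating conditions and the strict containment $x < \overline{r}_j$, avoiding any case analysis.
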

\begin{proof}
Assume, towards a contradiction, that the bifurcating interval $\overline{P}_j = [\overline{\ell}_j, \overline{r}_j]$ has value $v_i(\overline{P}_j) \geq \frac{1}{4} + \frac{\delta}{n}$ and $v_i([\overline{r}_j, 1]) \leq \frac{1}{2} - \frac{\delta}{n}$. These properties in fact imply the existence of a strict subset $X  \subsetneq \overline{P}_j$ that is bifurcating for $i$: write $x \in [\overline{\ell}_j, \overline{r}_j] $ to denote the leftmost point that satisfies $v_i([\overline{\ell}_j, x]) =  v_i(\overline{P}_j) - \frac{\delta}{n}$ and set $X = [\overline{\ell}_j, x]$. Since valuation $v_i$ is divisible, such a point $x$ exists and we have $x < \overline{r}_j$. Furthermore,  the lower bound on the value of $\overline{P}_j$ gives us $v_i(X) \geq \frac{1}{4}$. In addition, note that $v_i([0,\overline{\ell}_j]) \leq 1/2$, since $\overline{P}_j = [\overline{\ell}_j, \overline{r}_j]$ is bifurcating.  Also, using the inequality $v_i([\overline{r}_j, 1]) \leq \frac{1}{2} - \frac{\delta}{n}$ and the additivity of the valuation $v_i$, we get that $v_i([x,1]) \leq \frac{1}{2}$. Indeed, these bounds ensure that $X$ is a strict subset of $\overline{P}_j$ and is bifurcating for agent $i$. 

The existence of $X$ contradicts the selection criterion in Lines \ref{step:candidates} and \ref{step:update}. In particular, consider the iteration in which $\overline{P}_j$ was assigned and write $P'_i$ to denote the interval assigned to agent $i$ during that iteration. We note that 
\begin{align*}
\widehat{v}_i(P'_i) & \leq \widehat{v}_i(\overline{P}_i) \tag{via Lemma \ref{lemma:hat-monotone}} \\
& < \frac{1}{2} \tag{$\overline{P}_i$ is non-bifurcating \& Claim \ref{claim:non-bifurcating}}
\end{align*} 
On the other hand, $\widehat{v}_i(X) = 1$, for the interval $X$ identified above. Hence, $j$ would not be the selected agent in Line \ref{step:update}. This contradiction establishes the lemma. 
\end{proof}

We next prove that a guarantee, analogous to Lemma \ref{lemma:bifurcating-phase-one}, holds for Phase {\rm II} as well.\footnote{As in Lemma \ref{lemma:bifurcating-phase-one}, here we have an absolute bound on the value of the compared interval ${P}_j$.}

\begin{restatable}{lemma}{LemmaBifurcatingPtwo}
\label{lemma:bifurcating-phase-two}
Let $\mathcal{P}=(P_1, \ldots, P_n)$ be the partial allocation maintained by Algorithm \ref{alg:quat-ef} at the end of Phase {\rm II}. If, for an agent $i\in [n]$, the assigned interval $P_i$ is \emph{not} bifurcating (for $i$), but interval ${P}_j = [{\ell}_j, {r}_j]$ is bifurcating (for $i$). Then, at least one of the following inequalities holds: 
\begin{align*}
v_i(P_j) < \frac{1}{4} + \frac{\delta}{n}  \qquad \text{ or } \quad v_i([r_j, 1]) > \frac{1}{2} - \frac{\delta}{n}.
\end{align*}
\end{restatable}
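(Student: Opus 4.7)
The plan is to argue by contradiction. Assuming both $v_i(P_j) \geq \frac{1}{4} + \frac{\delta}{n}$ and $v_i([r_j, 1]) \leq \frac{1}{2} - \frac{\delta}{n}$, I would first use divisibility of $v_i$ (as in the proof of Lemma \ref{lemma:bifurcating-phase-one}) to produce a strict sub-interval $X = [\ell_j, x] \subsetneq P_j$ with $v_i(X) = v_i(P_j) - \frac{\delta}{n}$, and then verify the three bifurcation conditions for $X$: the value bound $v_i(X) \geq \frac{1}{4}$ is immediate, $v_i([0,\ell_j]) \leq \frac{1}{2}$ is inherited from $P_j$ being bifurcating, and $v_i([x,1]) = v_i([x,r_j]) + v_i([r_j,1]) \leq \frac{\delta}{n} + (\frac{1}{2} - \frac{\delta}{n}) = \frac{1}{2}$. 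Hence $\widehat{v}_i(X) = 1$. I would also observe once at the outset that, by Lemma \ref{lemma:hat-monotone} and Claim \ref{claim:non-bifurcating}, the assumption that $P_i$ is non-bifurcating for $i$ propagates backward: agent $i$'s assigned interval is non-bifurcating for $i$ at every iteration of both phases.

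Since Phase {\rm II} only grows intervals rightward and reassigns their ownership, there is a unique Phase-{\rm I} interval $\overline{P}_{j'} = [\ell_j, \overline{r}_{j'}]$ sharing $P_j$'s left endpoint, and $\overline{r}_{j'} \leq r_j$. I would split on the position of $x$ relative to $\overline{r}_{j'}$. In Case A ($x < \overline{r}_{j'}$), the interval $\overline{P}_{j'} \supseteq X$ is bifurcating for $i$ by monotonicity of $\widehat{v}_i$, so Lemma \ref{lemma:bifurcating-phase-one} applies (its hypothesis on agent $i$'s Phase-{\rm I} interval being non-bifurcating is met by the preceding observation) and yields $v_i(\overline{P}_{j'}) < \frac{1}{4} + \frac{\delta}{n}$ or $v_i([\overline{r}_{j'}, 1]) > \frac{1}{2} - \frac{\delta}{n}$. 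I would then argue that the $\ell_j$-interval is never a source's interval during Phase {\rm II}: if some source $s$ owned it at iteration $t$, the source property $\widehat{v}_i(P_i^{(t)}) \geq \widehat{v}_i(P_s^{(t)}) = 1$ would force $P_i^{(t)}$, and hence $P_i$ at the end of Phase {\rm II}, to be bifurcating for $i$. Consequently the $\ell_j$-interval is never extended in Phase {\rm II}, i.e., $P_j = \overline{P}_{j'}$, and both alternatives of Lemma \ref{lemma:bifurcating-phase-one} directly contradict the contradiction hypothesis.

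Case B ($x \geq \overline{r}_{j'}$) is the heart of the argument. The right endpoint of the $\ell_j$-interval must cross $x$ at some Phase {\rm II} iteration $t^*$; let the $\ell_j$-interval just before and just after this extension be $[\ell_j, r^-]$ and $[\ell_j, r^+]$, owned by the source agent $s$ at that iteration. Line \ref{step:crumb} guarantees $v_a([r^-, r^+]) \leq \frac{\delta}{n}$ for every agent $a$, and we have $r^- \leq x < r^+$. Since $[\ell_j, r^+] \supsetneq X$ is bifurcating for $i$, the same source-property argument used in Case A shows the $\ell_j$-interval is never re-extended in subsequent iterations, giving $P_j = [\ell_j, r^+]$. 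The crux is then to verify that the pre-extension interval $[\ell_j, r^-]$ is itself bifurcating for $i$: its value is $v_i(P_j) - v_i([r^-, r^+]) \geq \frac{1}{4}$; its left-tail bound $v_i([0, \ell_j]) \leq \frac{1}{2}$ is inherited from $P_j$; and $v_i([r^-, 1]) \leq v_i([r^-, r^+]) + v_i([r_j, 1]) \leq \frac{\delta}{n} + (\frac{1}{2} - \frac{\delta}{n}) = \frac{1}{2}$. Since $s$ is the source just before this extension, the source property yields $\widehat{v}_i(P_i^{(t^*)}) \geq \widehat{v}_i([\ell_j, r^-]) = 1$, and Lemma \ref{lemma:hat-monotone} then forces $P_i$ to be bifurcating for $i$ at the end, the desired contradiction.

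The main obstacle will be Case B: one must pinpoint the unique iteration at which the right endpoint of the $\ell_j$-interval first overtakes the constructed point $x$, argue (via the source property) that no further extensions occur afterwards, and then deploy the $\frac{\delta}{n}$ slack from both the extension step and the contradiction hypothesis to check all three bifurcating conditions for the pre-extension interval $[\ell_j, r^-]$. The contradiction then falls out of the source property at iteration $t^*$, exactly mirroring the role played by the cut-selection criterion in the proof of Lemma \ref{lemma:bifurcating-phase-one}.
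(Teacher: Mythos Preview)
Your argument is correct and rests on the same core mechanism as the paper's proof: under the contradiction hypothesis, the interval held by the source agent just before the relevant Phase~{\rm II} extension is already bifurcating for $i$, so the source property forces $\widehat{v}_i$ of agent $i$'s own interval to equal $1$, contradicting that $P_i$ is non-bifurcating. Where you differ is in the decomposition. You first construct the auxiliary sub-interval $X=[\ell_j,x]$, split on whether $x$ lies inside the Phase-{\rm I} ancestor $\overline{P}_{j'}$, and in Case~B locate the crossing iteration $t^*$ and then separately argue that no further extensions occur. The paper bypasses all of this: it simply splits on whether $P_j$ equals its Phase-{\rm I} ancestor (in which case Lemma~\ref{lemma:bifurcating-phase-one} applies verbatim) or was extended at least once; in the latter case it looks directly at the \emph{last} extension $P'_s \to P_j$ and verifies the three bifurcation conditions for $P'_s$ straight from the contradiction hypothesis and the $\delta/n$ bound on the appended piece. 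This avoids constructing $X$, avoids the Case~A/B split, and avoids the ``no re-extension'' argument entirely. Your route works, but the paper's is shorter: once you realize the pre-image of the final append already has $v_i$-value $\geq \tfrac14$, left-tail $\leq \tfrac12$, and right-tail $\leq \tfrac12$, the contradiction is immediate.
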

\begin{proof}
Write $\overline{\mathcal{P}}=(\overline{P}_1, \ldots, \overline{P}_n)$ to denote the partial allocation at the end of Phase {\rm I}. Note that, throughout Phase {\rm II}, the algorithm either reassigns the intervals (Line \ref{step:envycycle}) or appends (unassigned) pieces to them (Line \ref{step:crumb}). Hence, for the interval $P_j$, assigned to agent $j$ at the end of Phase {\rm II}, there exists $\overline{P}_k$, for some $k \in [n]$, such that $P_j \supseteq \overline{P}_k$. 

We assume, towards a contradiction, that the bifurcating interval ${P}_j = [{\ell}_j, {r}_j]$ has value $v_i({P}_j) \geq \frac{1}{4} + \frac{\delta}{n}$ and $v_i([{r}_j, 1]) \leq \frac{1}{2} - \frac{\delta}{n}$. It cannot be the case that $P_j = \overline{P}_k$ (for an interval $\overline{P}_k$ assigned at the end of Phase {\rm I}), since this would contradict Lemma \ref{lemma:bifurcating-phase-one}. Hence, in the remainder of the proof we address the complementary case wherein $P_j$ was obtained by appending to an interval, say $P'_s$, in an iteration of the second while-loop (specifically, Line \ref{step:crumb}). Also, write $P'_i$ to denote the interval assigned to agent $i$ during that iteration. Lemma \ref{lemma:hat-monotone} and the fact that $P_i$ is non-bifurcating for $i$ (Claim \ref{claim:non-bifurcating}) give us $\widehat{v}_i (P'_i) \leq \widehat{v_i}(P_i) < 1/2$. Using this inequality and the fact that $s$ was a source agent during the iteration under consideration, we get 
\begin{align}
\widehat{v}_i (P'_s) \leq \widehat{v}_i(P'_i) < 1/2 \label{ineq:shalf}
\end{align}
However, the assumptions on the values of $P_j = [{\ell}_j, {r}_j]$ and $[r_j,1]$ contradict inequality (\ref{ineq:shalf}): Let $z$ denote the right endpoint of $P'_s$, i.e., $P'_s = [\ell_j, z]$. Since a piece of bounded value is appended in Line \ref{step:crumb}, we have $v_i(P'_s) \geq v_i(P_j) - \frac{\delta}{n} \geq \frac{1}{4}$. Furthermore, using the inequality $v_i([{r}_j, 1]) \leq \frac{1}{2} - \frac{\delta}{n}$, we obtain $v_i([z, 1]) \leq \frac{1}{2}$. In addition, the fact that $P_j= [{\ell}_j, {r}_j]$ is a bifurcating interval gives us $v_i([0,\ell_j)) \leq \frac{1}{2}$, i.e., the value to the left of $P'_s = [\ell_j, z]$ is at most $1/2$. These observations imply that $P'_s$ is a bifurcating interval for $i$; in particular, $\widehat{v}_i(P'_s) = 1$. This bound contradicts inequality (\ref{ineq:shalf}) and completes the proof.  
\end{proof}

Using Lemma \ref{lem:envyfreephase2}, we next obtain a relevant envy bound for the allocation $\mathcal{I}$ returned by the algorithm. 
\begin{lemma}\label{lem:half-mult}
The allocation $\mathcal{I} = (I_1, \ldots, I_n)$ computed by Algorithm \ref{alg:quat-ef} satisfies $v_i(I_i) \geq \frac{1}{2}v_i(I_j)  - \frac{\delta}{n}$, for all agents $i, j \in [n]$. 
\end{lemma}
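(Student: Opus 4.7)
The plan is to combine the two envy bounds supplied by Lemma~\ref{lem:envyfreephase2} with a case analysis driven by whether $P_i$ is bifurcating for agent $i$.

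First I would extract from the construction of $\mathcal{I}$ (Lines 12--13 of Algorithm~\ref{alg:quat-ef}) that each $I_j$ is either the assigned interval $P_j$ alone, or the union $P_j \cup U_j$ of $P_j$ with a single adjacent unassigned interval $U_j \in \U_\mathcal{P}$. Applying Lemma~\ref{lem:envyfreephase2} separately to $P_j$ and (when it exists) to $U_j$, and using additivity of $v_i$, gives
\begin{align*}
v_i(I_j) \;\leq\; v_i(P_j) + v_i(U_j) \;\leq\; 2\,\widehat{v}_i(P_i) + \tfrac{2\delta}{n},
\end{align*}
equivalently $\widehat{v}_i(P_i) \geq \tfrac12 v_i(I_j) - \tfrac{\delta}{n}$. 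This halving is exactly the source of the multiplicative factor of $1/2$ in the target inequality.

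Next I would split on whether $P_i$ is bifurcating for $i$. In the non-bifurcating case the definition (\ref{defn:hat-v}) gives $\widehat{v}_i(P_i) = v_i(P_i)$, and since $I_i \supseteq P_i$ by construction, additivity of $v_i$ yields $v_i(I_i) \geq v_i(P_i) = \widehat{v}_i(P_i) \geq \tfrac12 v_i(I_j) - \tfrac{\delta}{n}$, and we are done.

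The main obstacle is the bifurcating case, where $\widehat{v}_i(P_i) = 1$ vastly overestimates $v_i(P_i)$, so the previous chain breaks down. Here I would instead exploit the geometric content of Definition~\ref{defn:bifurcating}: write $P_i = [\ell_i, r_i]$ and observe that, for any $j \neq i$, the interval $I_j$ is connected, disjoint from $P_i \subseteq I_i$, and therefore lies entirely inside either $[0,\ell_i]$ or $[r_i, 1]$. Bifurcation of $P_i$ caps both regions by $1/2$, so $v_i(I_j) \leq 1/2$, while the same definition gives $v_i(I_i) \geq v_i(P_i) \geq 1/4$. Hence $v_i(I_i) \geq 1/4 \geq \tfrac12 v_i(I_j) - \tfrac{\delta}{n}$, as required. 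The case $j=i$ is trivial, which completes the argument.
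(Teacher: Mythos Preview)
Your proposal is correct and follows essentially the same approach as the paper: both proofs sum the two inequalities from Lemma~\ref{lem:envyfreephase2} (for $P_j$ and $U_j$) to obtain the $1/2$ factor, and both handle the bifurcating case geometrically via Definition~\ref{defn:bifurcating}. The only cosmetic difference is that the paper splits on whether $I_i$ is bifurcating while you split on whether $P_i$ is bifurcating; since $P_i \subseteq I_i$, the two case analyses carve up the situation slightly differently but the arguments in each branch are the same.
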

\begin{proof}
Write $\mathcal{P}=(P_1, \ldots, P_n)$ to denote the partial allocation of the algorithm at the end of Phase {\rm II}. The execution condition of the second while-loop ensures that $|\U_\mathcal{P}| \leq n$. Also, at the end of the algorithm, for each unassigned interval $U \in \U_\mathcal{P}$, we select a distinct and adjacent interval $P_j$ and associate $U$ with $P_j$. In particular, let $U_j$ be the unassigned interval associated with $P_j$. If $P_j$ is not associated with any unassigned interval, then set $U_j = \emptyset$. Indeed, $I_j = P_j \cup U_j$ is a connected piece of the cake, i.e., an interval. 

For any agent $i \in [n]$, if interval $I_i$ is bifurcating, then $v_i(I_i) \geq \frac{1}{4}$. Furthermore, any other assigned interval $I_j$ is either completely to the left of $I_i$ or to the right of $I_i$. In either case, by the definition of bifurcating intervals, we have $v_i(I_j) \leq \frac{1}{2}$. Hence, for agents $i$ that receive a bifurcating interval $I_i$, we have the stated inequality, $v_i(I_i) \geq \frac{1}{2} v_i(I_j)$. 

It remains to show that the lemma holds for agents $i$ for whom $I_i$ is not bifurcating. For such agents, the interval $P_i \subseteq I_i$ is also non-bifurcating and, hence, $v_i(P_i) = \widehat{v}_i(P_i)$. Therefore, 
\begin{align*}
v_i(I_i) \geq v_i(P_i) \geq v_i(P_j) - \frac{\delta}{n} \quad \text{ and } \quad v_i(I_i) \geq v_i(P_i)  \geq v_i(U_j) - \frac{\delta}{n} && \text{(via Lemma \ref{lem:envyfreephase2})}
\end{align*}
Summing we get 
\begin{align*}
 2  v_i(I_i) \geq v_i(P_j) + v_i(U_j) -   \frac{2\delta}{n} = v_i(I_j)  -   \frac{2\delta}{n} \tag{since $v_i$ is additive}
\end{align*}
Hence, we obtain the stated inequality, $v_i(I_i) \geq \frac{1}{2}v_i(I_j)  - \frac{\delta}{n}$. This completes the proof.
\end{proof}

We now establish the main result of this section. 

\begin{theorem}
\label{theorem:add-ef}
Given any cake division instance---with Robertson-Webb query access to the valuations of the $n$ agents---and parameter $\delta \in (0,1)$, Algorithm \ref{alg:quat-ef} computes a connected cake division (i.e., an allocation) $\mathcal{I}=(I_1, \ldots, I_n)$ that is $\left(\frac{1}{4} + \frac{2\delta}{n}\right)$-$\EF$. The algorithm executes in time that is polynomial in $n$ and $\frac{1}{\delta}$. 
\end{theorem}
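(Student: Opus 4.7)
The runtime follows from Lemma \ref{lemma:time-complexity}, so the task reduces to bounding the envy $v_i(I_j) - v_i(I_i) \leq \frac{1}{4} + \frac{2\delta}{n}$ for each pair of agents $i,j$. The plan is to case-split on whether $I_i$ is a bifurcating interval for agent $i$ (Definition \ref{defn:bifurcating}). If $I_i$ is bifurcating for $i$, then $v_i(I_i) \geq \frac{1}{4}$ and, because $I_j$ lies entirely on one side of $I_i$, the boundary conditions in the definition force $v_i(I_j) \leq \frac{1}{2}$; thus the envy is at most $\frac{1}{4}$.

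When $I_i$ is not bifurcating, I would first observe that $P_i \subseteq I_i$ is not bifurcating either: since $I_i = P_i \cup U_i$ is formed by appending an adjacent unassigned piece, the outside mass on either side of $I_i$ is no larger than on the corresponding side of $P_i$, so $P_i$ bifurcating would force $I_i$ bifurcating. Therefore $\widehat{v}_i(P_i) = v_i(P_i)$, and Lemma \ref{lem:envyfreephase2} gives $v_i(P_i) \geq v_i(P_j) - \delta/n$ and $v_i(P_i) \geq v_i(U_j) - \delta/n$. I then split on the magnitude of $v_i(I_j)$. When $v_i(I_j) \leq \frac{1}{2} + \frac{2\delta}{n}$, Lemma \ref{lem:half-mult} yields $v_i(I_i) \geq \frac{1}{2} v_i(I_j) - \delta/n$, and envy $\leq \frac{1}{2} v_i(I_j) + \delta/n \leq \frac{1}{4} + \frac{2\delta}{n}$.

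The substantive regime is $v_i(I_j) > \frac{1}{2} + \frac{2\delta}{n}$. Here $v_i(I_j) > \frac{1}{2}$ implies $I_j$ is bifurcating for $i$ (contrapositive of Claim \ref{claim:non-bifurcating}), and summing the two inequalities from Lemma \ref{lem:envyfreephase2} gives $v_i(P_i) > \frac{1}{4}$. I would case-split on whether $P_j$ is bifurcating for $i$. When $P_j$ is bifurcating, Lemma \ref{lemma:bifurcating-phase-two} provides two disjuncts: either $v_i(P_j) < \frac{1}{4} + \delta/n$, which combined with $v_i(U_j) \leq v_i(P_i) + \delta/n$ immediately bounds envy by $\frac{1}{4} + \frac{2\delta}{n}$; or $v_i([r_{P_j},1]) > \frac{1}{2} - \delta/n$, which together with $P_j$ bifurcating forces $v_i([0,\ell_{P_j}]) < \frac{1}{4} + \delta/n$. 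In the latter case I further subdivide on the side of $P_j$ containing $U_j$: if $U_j$ lies to the left, $v_i(U_j) \leq v_i([0,\ell_{P_j}])$ closes the bound; if $U_j$ lies to the right, I exploit the connectedness of both $I_i$ and $I_j$ (so $I_i$ sits wholly on one side of $I_j$) and, when $I_i$ sits to the right of $I_j$, use the bifurcating budget $v_i([r_{P_j},1]) \leq \frac{1}{2}$ together with $I_i \subseteq [r_{U_j},1]$ to obtain $v_i(U_j) + v_i(P_i) \leq \frac{1}{2}$ and hence envy $\leq \frac{1}{2} + \delta/n - v_i(P_i) < \frac{1}{4} + \delta/n$. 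The complementary case where $P_j$ is not bifurcating is handled by examining which of its three defining conditions fails and using the bifurcation of $I_j$ to restrict the failure mode back to cases already treated.

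The main obstacle I anticipate lies in the ``heavy-side'' sub-case above, where a direct additive bound from Lemma \ref{lem:envyfreephase2} alone yields only envy $\leq v_i(P_i) + \frac{2\delta}{n}$, which is insufficient once $v_i(P_i) > \frac{1}{4}$. The critical geometric step is to use the bifurcation of $P_j$ (or its surrogate when $P_j$ is not bifurcating) to budget the value of the complementary region $[r_{P_j},1]$ between $v_i(U_j)$ and the portion containing $I_i$, thereby extracting the tighter constraint $v_i(U_j) + v_i(P_i) \leq \frac{1}{2}$.
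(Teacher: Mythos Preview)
Your treatment of the bifurcating-$I_i$ case and the ``small'' case $v_i(I_j)\le \tfrac12+\tfrac{2\delta}{n}$ is correct, and splitting on $v_i(I_j)$ rather than on $v_i(I_i)$ is a valid equivalent for that easy part. The difficulty is that your case tree in the ``substantive regime'' is incomplete, and the missing branches do not close with the tools you cite.

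Concretely, in the sub-case ``$P_j$ bifurcating, second disjunct of Lemma~\ref{lemma:bifurcating-phase-two}, $U_j$ to the right of $P_j$'', you only treat $I_i$ lying to the right of $I_j$. If $I_i$ lies to the \emph{left} of $I_j$, then $I_i\subseteq[0,\ell_{P_j}]$ and your own bound $v_i([0,\ell_{P_j}])<\tfrac14+\tfrac{\delta}{n}$ forces $v_i(I_i)<\tfrac14+\tfrac{\delta}{n}$. Combining this with Lemma~\ref{lem:half-mult} (or with the two inequalities from Lemma~\ref{lem:envyfreephase2}) yields only $v_i(I_j)-v_i(I_i)\le v_i(I_i)+\tfrac{2\delta}{n}<\tfrac14+\tfrac{3\delta}{n}$, which misses the target by $\tfrac{\delta}{n}$. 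The final paragraph, in which the non-bifurcating-$P_j$ case is reduced to ``cases already treated'', is also not fleshed out; the failure of, say, $v_i([0,\ell_{P_j}])\le\tfrac12$ forces $U_j$ to the left of $P_j$, but you then need a separate argument (not the disjunctive Lemma~\ref{lemma:bifurcating-phase-two}) to finish.

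The paper avoids these issues by organizing the hard case differently. It first disposes of $v_i(I_i)<\tfrac14$ via Lemma~\ref{lem:half-mult}; in the remaining Case~3 it assumes for contradiction that envy exceeds $\tfrac14+\tfrac{2\delta}{n}$ and shows that \emph{both} $v_i(P_j)\ge\tfrac14+\tfrac{\delta}{n}$ and $v_i(U_j)\ge\tfrac14+\tfrac{\delta}{n}$. After fixing (without loss of generality) $I_j$ to the left of $I_i$, it splits on which of $P_j,U_j$ is adjacent to $I_i$, not on whether $P_j$ is bifurcating. The adjacent piece is then shown to be bifurcating for $i$ \emph{with a $\delta/n$ margin}: when it is $P_j$, both disjuncts of Lemma~\ref{lemma:bifurcating-phase-two} fail simultaneously (no disjunctive analysis is needed); when it is $U_j$, the paper invokes the termination condition of Phase~{\rm I} (there is an unassigned $\overline U\supseteq U_j$ that is bifurcating for $i$ while $\overline P_i$ is not, contradicting the exit of the first while-loop). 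This Phase-{\rm I}-termination argument for the $U_j$-inner sub-case is the ingredient missing from your outline, and adopting the geometric split (inner piece vs.\ outer piece) in place of the ``is $P_j$ bifurcating?'' split is what collapses your residual sub-cases.
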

\begin{proof}
Fix any agent $i \in [n]$. We establish the theorem by considering three complementary and exhaustive cases, based on the interval $I_i$ (assigned to agent $i$): \\
\noindent 
Case {\sf 1}: Interval $I_i$ is bifurcating, \\
\noindent
Case {\sf 2}: Value $v_i(I_i) < \frac{1}{4}$, \\
\noindent
Case {\sf 3}: Interval $I_i$ is not bifurcating and $v_i(I_i) \geq \frac{1}{4}$. 

In Case {\sf 1}, since interval $I_i$ is bifurcating for agent $i$, we have $v_i(I_i) \geq \frac{1}{4}$ and, for any other interval $I_j$ (either to the left of $I_i$ or to its right), we have $v_i(I_j) \leq \frac{1}{2}$. Therefore, in this case, the stated approximation bound on envy holds, $v_i(I_i) \geq v_i(I_j) - \frac{1}{4}$. 

In Case {\sf 2}, value $v_i(I_i) < \frac{1}{4}$. Note that, Lemma \ref{lem:half-mult} gives us $v_i(I_i) \geq \frac{1}{2}  v_i(I_j) - \frac{\delta}{n}$, for any other agent $j \in [n]$. Multiplying both sides of this inequality by $2$ and simplifying we obtain  
\begin{align*}
v_i(I_i) &\geq v_i(I_j) - v_i(I_i) - \frac{2\delta}{n} \geq v_i(I_j) - \frac{1}{4} - \frac{2\delta}{n} \tag{since $v_i(I_i) < \frac{1}{4}$}
\end{align*}
Therefore, in Case {\sf 2} as well, for agent $i$ the envy is additively at most $\left(\frac{1}{4} + \frac{2\delta}{n}\right)$. 
\begin{figure}[h]
	\begin{center}
		\includegraphics[scale=.6]{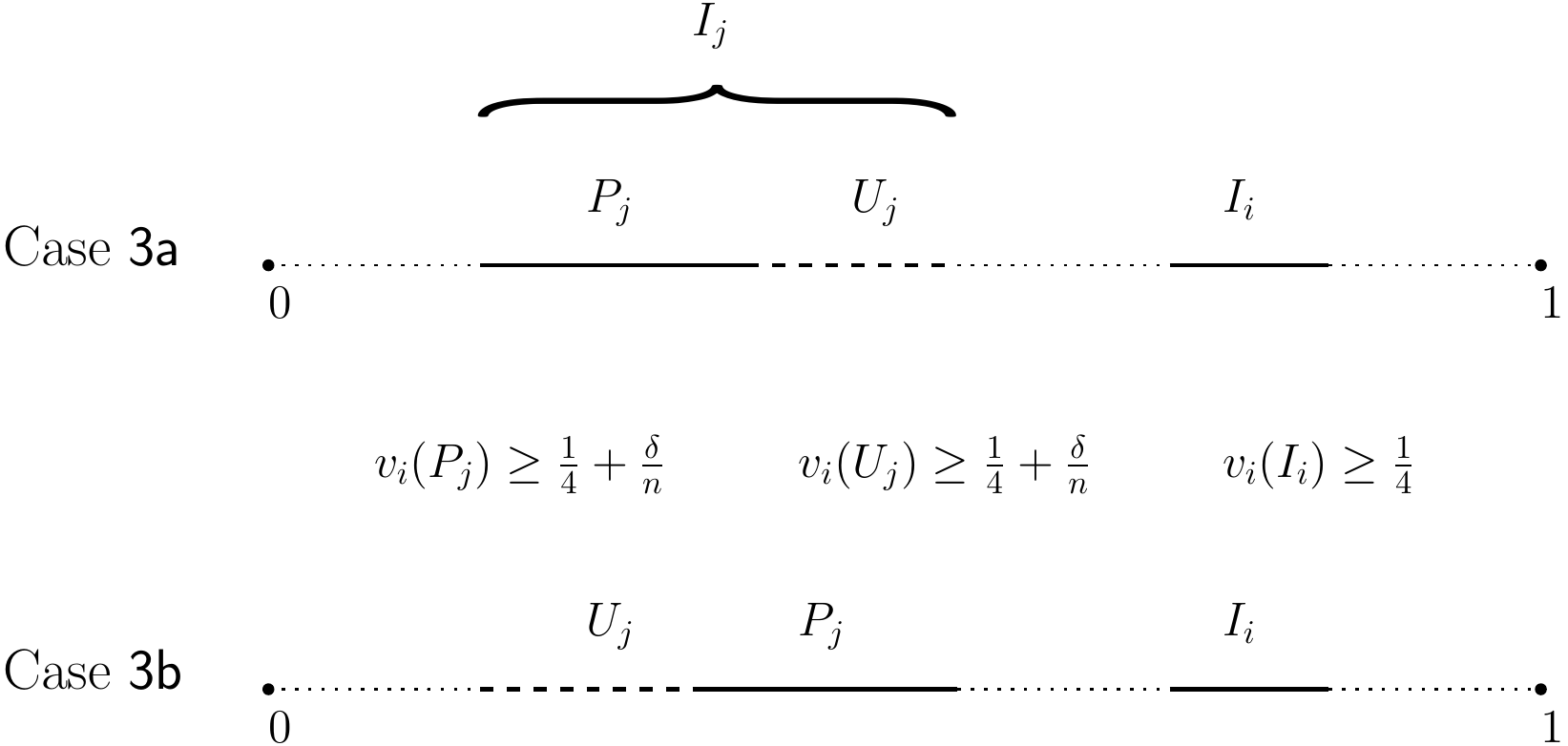}
	\end{center}
	\vspace*{-17pt}
	\caption{Placement of intervals $P_j$ and $U_j$ in Case {\sf 3}.}
    \label{fig:Case3}
\end{figure}

Finally, in Case {\sf 3}, interval $I_i$ is not bifurcating and $v_i(I_i) \geq \frac{1}{4}$. Write $\mathcal{P}=(P_1, \ldots, P_n)$ to denote the partial allocation at the end of Phase {\rm II} and recall that $I_k = P_k \cup U_k$, for each agent $k \in [n]$ and the associated unassigned interval $U_k \in \mathcal{U}_\mathcal{P}$. For analyzing this case, assume, towards a contradiction, that there exists an interval $I_j$ that violates the stated approximate envy-freeness bound, i.e., 
\begin{align}
v_i(I_j) > v_i(I_i) + \frac{1}{4} +  \frac{2\delta}{n}.\label{ineq:for-contra}
\end{align}
Since, in the current case, $v_i(I_i) \geq \frac{1}{4}$, inequality (\ref{ineq:for-contra}) reduces to $v_i(I_j) > \frac{1}{2} + \frac{2\delta}{n}$. We will further show that for interval $I_j$ the composing sub-intervals $P_j$ and $U_j$ are each of value (under $v_i$) at least $\frac{1}{4} + \frac{\delta}{n}$. Towards this, note that, in the current case, since $I_i$ is not bifurcating for $i$, neither is $P_i \subseteq I_i$.  Furthermore, 
\begin{align}
v_i(I_i) &\geq v_i(P_i) \tag{$v_i$ is monotonic} \\
&= \widehat{v}_i(P_i) \tag{since $P_i$ is not bifurcating for $i$} \\
& \geq  v_i(P_j) - \frac{\delta}{n}  \label{ineq:int}
\end{align}
The last inequality follows from Lemma \ref{lem:envyfreephase2}. A similar application of the lemma also gives us 
\begin{align}
v_i(I_i) & \geq v_i(U_j) - \frac{\delta}{n}  \label{ineq:inta}
\end{align}
Inequalities (\ref{ineq:for-contra}), (\ref{ineq:int}), and (\ref{ineq:inta})  imply that the values of both $P_j$ and $U_j$ are at least $\frac{1}{4} + \frac{\delta}{n}$. Otherwise, say $v_i(P_j) < \frac{1}{4} + \frac{\delta}{n}$. Then, 
\begin{align*}
v_i(I_j) = v_i(P_j) + v_i(U_j) < \frac{1}{4} + \frac{\delta}{n} + v_i(U_j) \leq v_i(I_i) + \frac{1}{4} + \frac{2 \delta}{n} \tag{via inequality (\ref{ineq:inta})}
\end{align*} 
Since the last inequality contradicts assumption (\ref{ineq:for-contra}), we have $v_i(P_j) \geq \frac{1}{4} + \frac{\delta}{n}$. Similarly, $v_i(U_j) \geq \frac{1}{4} + \frac{\delta}{n}$. 

As mentioned previously, $v_i(I_j) > \frac{1}{2} + \frac{2 \delta}{n}$. Hence, the values to the left and to the right of $I_j=[\ell_j, r_j]$ are upper bounded as follows:\footnote{Recall that the value of the entire cake is normalized, $v_i([0,1])=1$.} $v_i([0,\ell_j]) \leq \frac{1}{2} - \frac{2\delta}{n}$ and $v_i([r_j,1]) \leq \frac{1}{2} - \frac{2\delta}{n}$. 

For the subsequent analysis, we also assume that interval $I_j$ is on the left of $I_i$ (see Figure \ref{fig:Case3}); the proof for the other configuration (of $I_j$ being to the right of $I_i$) follows analogously. Now, there are two sub-cases to consider: \\
\noindent
Case {\sf 3a}: Interval $P_j$ is to the left of $U_j$ (i.e., $U_j$ lies between $P_j$ and $I_i$). \\
\noindent 
Case {\sf 3b}: Interval $P_j$ is to the right of $U_j$ (i.e., $P_j$ lies between $U_j$ and $I_i$). \\

In Case {\sf 3a}, we note that the interval $U_j \in \mathcal{U}_\mathcal{P}$ is bifurcating for agent $i$: As observed above, $v_i(U_j) \geq \frac{1}{4} + \frac{\delta}{n}$ and the value (in the cake) to the right of $U_j$ is equal to $v_i([r_j, 1]) \leq \frac{1}{2} - \frac{2\delta}{n}$. In addition, the value to the left of $U_j$ is at most $1 - (v_i(U_j) + v_i(I_i)) \leq 1 - \frac{1}{4}  - \frac{1}{4} - \frac{\delta}{n} = \frac{1}{2} - \frac{\delta}{n}$; interval $I_i$ is to the right to $I_j$ and, hence, to the right of $U_j$. Hence, $U_j \in \mathcal{U}_\mathcal{P}$ is bifurcating for agent $i$. 

Also, the design of Phase {\rm II} ensures that, for the interval $U_j$, there exists an unassigned interval $\overline{U} \in \mathcal{U}_{\overline{\mathcal{P}}}$ such that $\overline{U} \supseteq U_j$; here $\overline{\mathcal{P}}=(\overline{P}_1, \ldots, \overline{P}_n)$ denotes the partial allocation at the end of Phase {\rm I}. Since $U_j$ is bifurcating for $i$, so is $\overline{U}$. By contrast, in the current case (Case {\sf 3}), the interval $P_i$ is not bifurcating for $i$ and, hence, neither is $\overline{P}_i$ (Lemma \ref{lemma:hat-monotone}). That is, $\widehat{v}_i (\overline{P}_i) < \frac{1}{2} < 1 = \widehat{v}_i (\overline{U})$. The bound, however, contradicts the termination of the first while-loop. Therefore, by way of contradiction, we get that assumption (\ref{ineq:for-contra}) cannot hold in Case {\sf 3a}. This completes the analysis of this sub-case. \\

In Case {\sf 3b}, we note that the interval $P_j$ is bifurcating for agent $i$, with a margin of $\frac{\delta}{n}$: As observed above, $v_i(P_j) \geq \frac{1}{4} + \frac{\delta}{n}$ and the value to the right of $P_j$ is equal to $v_i([r_j, 1]) \leq \frac{1}{2} - \frac{2\delta}{n}$. In addition, the value to the left of $P_j$ is at most $1 - (v_i(P_j) + v_i(I_i)) \leq 1 - \frac{1}{4}  - \frac{1}{4} - \frac{\delta}{n} = \frac{1}{2} - \frac{\delta}{n}$. The existence of such a bifurcating interval $P_j$ contradicts Lemma \ref{lemma:bifurcating-phase-two}. Hence, even in Case {\sf 3b}, we must have $v_i(I_j) \leq v_i(I_i) + \frac{1}{4} + \frac{2\delta}{n}$, i.e., the stated bound on envy holds.  

This completes the analysis for all the cases, and the theorem stands proved. 
\end{proof}

Complementing the additive envy-freeness guarantee obtained in Theorem \ref{theorem:add-ef}, the next result establishes that, in the computed allocation $\mathcal{I}$, the envy is within a factor of $(2+c)$, where parameter $c \in (0,1)$ is polynomially small (in $n$). 
\begin{theorem}
\label{theorem:mult-ef}
Given any cake division instance---with Robertson-Webb query access to the valuations of the $n$ agents---and parameter $c \in (0,1)$, we can compute (in time polynomial in $n$ and $1/c$) a connected cake division (i.e., an allocation) $\mathcal{I}=(I_1, \ldots, I_n)$ that is $\frac{1}{2+c}$-mult-$\EF$. 
\end{theorem}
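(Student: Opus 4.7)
The plan is to run Algorithm \ref{alg:quat-ef} with $\delta \vcentcolon= \frac{c}{2(2+c)}$, which lies in $(0, 1/6) \subset (0, 1/4)$ for every $c \in (0,1)$, and then deduce the multiplicative bound directly from Lemma \ref{lem:half-mult}. By Lemma \ref{lemma:time-complexity}, this execution takes time polynomial in $n$ and $1/\delta$, hence polynomial in $n$ and $1/c$. What remains is to verify, for the returned allocation $\mathcal{I} = (I_1, \ldots, I_n)$, that $(2+c)\, v_i(I_i) \geq v_i(I_j)$ for every pair of agents $i, j \in [n]$.

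Fix such a pair $i,j$. Lemma \ref{lem:half-mult} gives $v_i(I_i) \geq \tfrac{1}{2} v_i(I_j) - \tfrac{\delta}{n}$. A one-line rearrangement shows that the desired inequality $(2+c)\, v_i(I_i) \geq v_i(I_j)$ follows immediately whenever $v_i(I_j) \geq \tfrac{2\delta(2+c)}{cn}$, and by the choice of $\delta$ this threshold equals exactly $\tfrac{1}{n}$. So I would dispose of \textbf{Case A}, $v_i(I_j) \geq 1/n$, purely through Lemma \ref{lem:half-mult}.

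The subtle case is \textbf{Case B}, when $v_i(I_j) < 1/n$: here the additive slack $\delta/n$ in Lemma \ref{lem:half-mult} can dominate the multiplicative target, and the lemma alone does not even preclude $v_i(I_i) = 0$. I would resolve this by exploiting normalization: since $\sum_{k=1}^n v_i(I_k) = v_i([0,1]) = 1$, pigeonhole gives some $j^* \in [n]$ with $v_i(I_{j^*}) \geq 1/n$. Applying the Case-A argument to the pair $(i,j^*)$ yields $v_i(I_i) \geq \tfrac{v_i(I_{j^*})}{2+c} \geq \tfrac{1}{n(2+c)}$. Combined with the Case-B hypothesis $v_i(I_j) < 1/n$, this gives $(2+c)\, v_i(I_i) \geq \tfrac{1}{n} > v_i(I_j)$, closing the case.

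The main obstacle, therefore, is that Lemma \ref{lem:half-mult} becomes vacuous for tiny $v_i(I_j)$; the key idea is to calibrate $\delta$ so that the critical threshold $\tfrac{2\delta(2+c)}{cn}$ lines up exactly with the pigeonhole bound $\tfrac{1}{n}$, allowing the two regimes to be stitched together without losing additional factors.
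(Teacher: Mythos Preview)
Your proof is correct and follows essentially the same route as the paper's: both invoke Lemma~\ref{lem:half-mult} and then use normalization ($\sum_j v_i(I_j)=1$) to extract a lower bound on $v_i(I_i)$ that absorbs the additive $\delta/n$ slack. The only cosmetic differences are that the paper sums the lemma over all $j$ (obtaining $v_i(I_i) \geq \tfrac{1}{4n}$ directly, rather than splitting into cases and using pigeonhole) and works with the looser choice $\delta = c/8$.
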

\begin{proof}
The theorem directly follows from Lemma \ref{lem:half-mult}. In particular, we execute Algorithm \ref{alg:quat-ef} with parameter $\delta = \frac{c}{8}$, for a sufficiently small $c \in (0,1)$,\footnote{With this parameter choice, the algorithm executes in time that is polynomial in $n$ and $1/c$. } and note that, for the computed allocation $\mathcal{I}$, Lemma \ref{lem:half-mult} gives us $v_i(I_i) \geq \frac{1}{2}v_i(I_j)  - \frac{\delta}{n} = \frac{1}{2}v_i(I_j)  - \frac{c}{8n}$, for agents $i, j \in [n]$. Summing over $j$, we obtain 
\begin{align}
n \ v_i(I_i) \geq \frac{1}{2} \sum_{j=1}^n v_i(I_j)   -  \frac{c}{8} = \frac{1}{2} - \frac{c}{8} \label{ineq:near-prop}
\end{align}
The last equality follows from the fact that $I_1, \ldots, I_n$ constitute a complete partition of the cake, with value $v_i([0,1]) = 1$. Since constant $c \leq 1$, inequality (\ref{ineq:near-prop}) reduces to $v_i(I_i) \geq \frac{1}{4n}$, for all agents $i \in [n]$. Therefore, the bound obtained in Lemma \ref{lem:half-mult} can be expressed as 
\begin{align*}
v_i(I_i)  \geq \frac{1}{2}v_i(I_j)  - \frac{c}{8n} \geq \frac{1}{2}v_i(I_j)  - \frac{c}{2} v_i(I_i).
\end{align*}
Simplifying we obtain $\left( 2 + c \right) v_i(I_i) \geq v_i(I_j)$, for all agents $i, j \in [n]$. Therefore, the computed allocation is $\frac{1}{2+c}$-mult-$\EF$
\end{proof}
\section{An $\varepsilon$-$\EF$ Algorithm under Bounded Heterogeneity}\label{sec:boundedhet}
\label{section:bounded-het}
This section addresses cake division instances in which, for a parameter $\varepsilon \in (0,1)$ and across the $n$ agents, the number of distinct valuations is at most $(\varepsilon n-1)$. Our algorithm (Algorithm \ref{alg:boundedhetero}) for finding $\varepsilon$-$\EF$ allocations in such instances in detailed next.  

\begin{algorithm}[h]
\caption{$\varepsilon$-$\EF$ under bounded heterogeneity} \label{alg:boundedhetero}
\textbf{Input:} A cake division instance with oracle access to the valuations $\{v_i\}_{i=1}^n$ of the $n$ agents along with parameter $\varepsilon \in (0,1)$.\\ 
\textbf{Output:} A complete allocation $(I_1,\ldots, I_n)$. 
\begin{algorithmic}[1]
\STATE Set $T$ to be the smallest integer such that $T \varepsilon \geq 1$, i.e., $T \coloneqq \left\lceil \frac{1}{\varepsilon}\right\rceil$.
\STATE \label{step:cut-pts} For each agent $i \in [n]$, let $0 = x^i_0  < x^i_1 < x^i_2 < \ldots < x^i_{T-1} < x^i_T = 1$ be the collection of $(T+1)$ cut points that satisfy $v_i([x^i_{t-1}, x^i_{t}]) = \varepsilon$, for all $1 \leq t \leq T-1$, and $v_i([x^i_{T-1}, x^i_{T}]) \leq \varepsilon$. 
\STATE \label{step:Z-pop} Let $Z$ be the union of these cut points $Z \coloneqq \bigcup_{i \in [n]} \left\{x^i_0, x^i_1, \ldots, x^i_{T-1}, x^i_T \right\}$ 
\\ \COMMENT{$Z$ is not a multiset, i.e., multiple instances of same cut point are not repeated in $Z$.}
\STATE \label{step:populate-F} Index the points in $Z = \left\{z_0, z_1, z_2, \ldots, z_r \right\}$ such that $0= z_0  < z_1 < z_2 < \ldots < z_r =1$ and define the collection of intervals $\mathcal{F} \coloneqq \Big\{ [z_t, z_{t+1} ] \Big\}_{t=0}^{r-1}$ 
\FOR{agents $i = 1$ to $n$}
\STATE If $\mathcal{F} = \emptyset$, then set interval $I_i = \emptyset$. Otherwise, if $\mathcal{F} \neq \emptyset$, then set $I_i = \argmax_{F \in \mathcal{F}} \ v_i(F)$ and update $\mathcal{F} \gets \mathcal{F} \setminus \{ I_i \}$. 
\ENDFOR
\RETURN allocation $\mathcal{I} = (I_1, \ldots, I_n)$.				
\end{algorithmic}
\end{algorithm}
We first show that the collection of intervals computed by Algorithm \ref{alg:boundedhetero} cover the entire cake. We will then use this lemma to establish the approximate envy-freeness guarantee in Theorem \ref{theorem:bounded-het}.  

\begin{lemma}\label{lem:boundedheterofull}
Given any cake division instance in which, across the $n$ agents, the number of {distinct} valuations is at most $\left(\varepsilon n -1\right)$, Algorithm \ref{alg:boundedhetero}'s output $\mathcal{I} = (I_1, \ldots, I_n)$ is a complete allocation, i.e., $I_i$s are pairwise disjoint and $\cup_{i \in [n]} I_i = [0,1]$.  
\end{lemma}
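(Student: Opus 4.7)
The plan is to reduce the lemma to the single arithmetic inequality $|\mathcal{F}| \le n$. By construction, the intervals of $\mathcal{F}$ are adjacent sub-intervals of $[0,1]$ whose union is all of $[0,1]$ and which are pairwise disjoint (modulo shared endpoints, as is the convention in Section \ref{section:notation}). The inner loop of Algorithm \ref{alg:boundedhetero} assigns distinct elements of $\mathcal{F}$ to successive agents and assigns $\emptyset$ to any remaining agents once $\mathcal{F}$ is exhausted. Hence pairwise disjointness of $I_1,\ldots,I_n$ is immediate, and $\bigcup_i I_i = [0,1]$ precisely when every interval of $\mathcal{F}$ is eventually picked, i.e.\ when $|\mathcal{F}| \le n$.

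To bound $|\mathcal{F}|$, I would first count contributions to $Z$ grouped by distinct valuations. Let $k$ denote the number of distinct valuations among the $n$ agents, so that $k \le \varepsilon n - 1$ by hypothesis. Since the cut points $x_0^i,\ldots,x_T^i$ in Line \ref{step:cut-pts} depend only on $v_i$, any two agents sharing a valuation contribute the same $T{+}1$ points to $Z$. Moreover $x_0^i = 0$ and $x_T^i = 1$ are common to every valuation. Therefore $|Z| \le 2 + k(T-1)$, and so $|\mathcal{F}| = |Z|-1 \le 1 + k(T-1)$.

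Next I would plug in the definition of $T$. Since $T$ is the \emph{smallest} integer with $T\varepsilon \ge 1$, we have $(T-1)\varepsilon < 1$, i.e.\ $T-1 < 1/\varepsilon$. Combining with $k \le \varepsilon n - 1$ yields
\[
|\mathcal{F}| \;\le\; 1 + k(T-1) \;\le\; 1 + (\varepsilon n - 1)\cdot \tfrac{1}{\varepsilon} \;=\; n + 1 - \tfrac{1}{\varepsilon} \;\le\; n,
\]
where the last step uses $\varepsilon \le 1$, so $1/\varepsilon \ge 1$. Hence $|\mathcal{F}| \le n$ as required, and the completeness of the output allocation follows from the reduction in the first paragraph.

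There is no genuine obstacle here: the whole argument is an exercise in counting. The only two subtleties are (i) to be careful with the bookkeeping of the shared boundary points $0$ and $1$ when bounding $|Z|$ in terms of the number of distinct valuations, and (ii) to exploit the \emph{strict} inequality $(T-1)\varepsilon < 1$ rather than just $T\varepsilon \ge 1$. The threshold $\varepsilon n - 1$ in the hypothesis is calibrated precisely so that the above arithmetic pushes through; any weaker bound on $k$ would fail this step.
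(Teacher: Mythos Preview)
Your proposal is correct and matches the paper's proof essentially step for step: both reduce the claim to $|\mathcal{F}|\le n$ (equivalently $|Z|\le n+1$), count at most $(T-1)$ interior cut points per distinct valuation plus the two shared endpoints $0$ and $1$, and then invoke the strict inequality $T-1<1/\varepsilon$ together with the bound $k\le\varepsilon n-1$ to conclude.
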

\begin{proof}
By construction, the set of intervals $\mathcal{F}$ populated in Line \ref{step:populate-F} of Algorithm \ref{alg:boundedhetero} are pairwise disjoint and cover the entire cake. We will show that the number of intervals in $\mathcal{F}$ is at most $n$, i.e., $|\mathcal{F}| \leq n$. Since the assigned intervals, $I_i$s, are selected from the set $\mathcal{F}$ (see the for-loop in the algorithm), the cardinality bound implies that no interval in $\mathcal{F}$ remains unassigned. Hence, $\cup_{i \in [n]} I_i = [0,1]$. Also, given that the intervals in $\mathcal{F}$ are pairwise disjoint, so are the $I_i$s. Therefore,  $\mathcal{I} = (I_1, \ldots, I_n)$ is a complete allocation. 

We complete the proof by establishing that $|\mathcal{F}| \leq n$. Towards this it suffices to show that $|Z| \leq n +1$; see Line \ref{step:populate-F} and note that $|\mathcal{F}| = |Z|-1$. In Line \ref{step:cut-pts}, for each agent $i \in [n]$, we consider $T+1$ cut points $0 = x^i_0  < x^i_1 < x^i_2 < \ldots < x^i_{T-1} < x^i_T = 1$. The end points of the cake, $0$ and $1$, are considered for every agent. Moreover, for any two agents, $i, j \in [n]$, with identical valuations, $v_i = v_j$, even the remaining $(T-1)$ points are the same: $x^i_t = x^j_t$ for all $1 \leq t \leq T-1$. Since the number of distinct valuations is at most $(\varepsilon n -1)$, there are at most $(\varepsilon n -1) (T-1)$ cut points in $Z$ that are strictly between $0$ and $1$. Including the endpoints of the cake in the count, we get $|Z| \leq (\varepsilon n -1) \left( T - 1\right)+2 \leq (\varepsilon n - 1) \frac{1}{\varepsilon} + 2$. The last inequality follows from the definition of $T$; in particular, $T -1 < \frac{1}{\varepsilon}$. Simplifying we obtain $|Z| \leq n - \frac{1}{\varepsilon} + 2 \leq n+1$; recall that $\varepsilon \leq 1$. Therefore, $|\mathcal{F}| \leq n$ and the lemma stands proved.  
\end{proof}

The following theorem establishes that Algorithm \ref{alg:boundedhetero} finds an allocation $\mathcal{I} = (I_1, \ldots, I_n)$ that satisfies $v_i(I_i) \geq v_i(I_j) - \varepsilon$ for all agents $i, j \in [n]$.

\begin{restatable}{theorem}{BoundedHetero}
\label{theorem:bounded-het}
Given any cake division instance in which, across the $n$ agents, the number of {distinct} valuations is at most $\left(\varepsilon n -1\right)$, Algorithm $\ref{alg:boundedhetero}$ (with Robertson-Webb query access to the valuations) computes an $\varepsilon$-$\EF$ allocation in polynomial time. 
\end{restatable}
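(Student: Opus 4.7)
\medskip

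\noindent
My plan is to bootstrap directly from Lemma \ref{lem:boundedheterofull}, which already furnishes the completeness of the allocation $\mathcal{I}=(I_1,\ldots,I_n)$, and then verify the $\varepsilon$-additive envy-freeness bound by combining two observations: first, every interval in the collection $\mathcal{F}$ has small value (at most $\varepsilon$) for every agent; second, when an agent makes her selection she greedily picks the most valuable remaining interval. Together, these immediately separate the analysis into whether the potentially envied agent was served before or after $i$.

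\smallskip

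\noindent
\textbf{Key structural step.} The first move will be to prove that $v_i(F) \le \varepsilon$ for every agent $i\in[n]$ and every $F\in \mathcal{F}$. For this, I will fix $i$, recall that the cut points $x^i_0,x^i_1,\ldots,x^i_T$ defined in Line \ref{step:cut-pts} are, by construction, all included in $Z$ (see Line \ref{step:Z-pop}), and hence appear among the sorted points $z_0<z_1<\cdots<z_r$ that define $\mathcal{F}$ in Line \ref{step:populate-F}. Consequently, for any $F = [z_t,z_{t+1}]\in\mathcal{F}$ there is some index $s$ with $[z_t,z_{t+1}]\subseteq [x^i_{s-1},x^i_s]$; otherwise the interior of $[z_t,z_{t+1}]$ would contain some $x^i_s\in Z$, contradicting the consecutiveness of $z_t$ and $z_{t+1}$ in the sorted list. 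Using monotonicity of $v_i$ and the defining property $v_i([x^i_{s-1},x^i_s])\le \varepsilon$, I conclude $v_i(F)\le \varepsilon$.

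\smallskip

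\noindent
\textbf{Envy bound via case analysis.} With this uniform per-interval value bound in hand, I will fix arbitrary agents $i,j\in[n]$ and split into two cases. If $j>i$, then at the moment agent $i$ makes her choice in the for-loop, interval $I_j$ has not yet been removed from $\mathcal{F}$, and since $I_i$ is selected as $\argmax_{F\in\mathcal{F}} v_i(F)$, we immediately get $v_i(I_i)\ge v_i(I_j)$. If instead $j<i$, then $I_j$ may be unavailable at $i$'s turn, but the structural step above forces $v_i(I_j)\le \varepsilon$; since $v_i(I_i)\ge 0$, the bound $v_i(I_i)\ge v_i(I_j)-\varepsilon$ still holds. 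The same reasoning handles the corner case in which $I_i=\emptyset$ (because $\mathcal{F}$ was exhausted before $i$'s turn), as Lemma \ref{lem:boundedheterofull} guarantees this does not affect completeness and $v_i(I_j)\le \varepsilon$ still caps any envy.

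\smallskip

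\noindent
\textbf{Runtime.} Finally, I will verify polynomial runtime under the Robertson-Webb model. Constructing the $T+1\le \lceil 1/\varepsilon\rceil +1$ cut points for each of the $n$ agents takes $O(n/\varepsilon)$ cut queries in total; sorting the resulting set $Z$ (of cardinality at most $n+1$ by Lemma \ref{lem:boundedheterofull}) takes $O(n\log n)$ time; and for each of the $n$ agents the $\argmax$ selection requires $O(|\mathcal{F}|)=O(n)$ evaluation queries, giving an overall $\mathrm{poly}(n,1/\varepsilon)$ runtime. Since the structural observation about $v_i(F)\le \varepsilon$ is essentially forced by how $Z$ is populated, I do not foresee a serious obstacle; the only point requiring a little care is the $j<i$ case, where the envy cannot be controlled by the greedy choice and must instead be absorbed by the small-value property of intervals in $\mathcal{F}$.
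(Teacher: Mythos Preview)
Your proposal is correct and follows essentially the same approach as the paper: the core step in both is showing that every interval $F\in\mathcal{F}$ satisfies $v_i(F)\le\varepsilon$ for every agent $i$, via the containment $F\subseteq[x^i_{s-1},x^i_s]$ for some $s$. The one minor difference is that your case split on $j>i$ versus $j<i$ is unnecessary---the paper simply applies the uniform bound $v_i(I_j)\le\varepsilon$ to all $j$ and concludes $v_i(I_i)\ge 0\ge v_i(I_j)-\varepsilon$ directly, without ever invoking the greedy selection property.
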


\begin{proof}
The runtime analysis of the algorithm is direct. Also, via Lemma \ref{lem:boundedheterofull}, we have that the the returned tuple $\mathcal{I} = (I_1, \ldots, I_n)$ is indeed a complete allocation. 

For proving that the algorithm achieves an $\varepsilon$-$\EF$ guarantee, consider any agent $i \in [n]$ and interval $I_j = [z_t, z_{t+1}]$, where $z_t$ and $z_{t+1}$ are successive points in the set $Z$; see Line \ref{step:populate-F}. If $I_j = \emptyset$, then in fact $i$ does not envy $j$. We will show that $v_i(I_j) \leq \varepsilon$ and, hence, obtain the desired bound: $v_i(I_i) \geq v_i(I_j) - \varepsilon$.  

For agent $i$, write $x^i_s$ to be the largest (rightmost) cut point considered in Line \ref{step:cut-pts} that satisfies $x^i_s \leq z_t$. In particular, $x^i_{s+1} > z_t$. Note that the set $Z$ (see Line \ref{step:Z-pop}) contains all the points $x^i_0, x^i_1, \ldots, x^i_{T-1}, x^i_T$; in particular, $x^i_{s+1} \in Z$. In addition, $z_t$ and $z_{t+1}$ are two successive points in $Z$. Hence, we have $z_{t+1} \leq x^i_{s+1}$ and the interval $I_j = [z_t, z_{t+1}] \subseteq [x^i_s, x^i_{s+1}]$. By construction, $v_i([x^i_s, x^i_{s+1}]) \leq \varepsilon$ and, hence, $v_i(I_j) \leq \varepsilon$.  This bound on the valuation of interval $I_j$ implies that the computed allocation is $\varepsilon$-$\EF$. The theorem stands proved. 
\end{proof}

Note that the algorithm might assign some agents $i \in [n]$ an interval of value of zero; in particular, $I_i = \emptyset$. Imposing the requirement that each agent $i \in [n]$ receives an interval of nonzero value (to $i$) renders the problem as hard as finding an $\varepsilon$-$\EF$ allocation in general cake division instances (without bounded heterogeneity). To see this, consider any cake division instance (which might not satisfy the bounded heterogeneity condition).  Append to the cake another unit length interval and include $\left \lceil \frac{n+2}{\varepsilon}\right\rceil$ dummy agents that have identical valuation confined to the appended interval. This new instance satisfies bounded heterogeneity. Now, if each agent in the constructed instance receives an interval of nonzero value, then the appended interval must have been divided among the dummy agents and the underlying cake $[0,1]$ among the original agents. This way we obtain an $\varepsilon$-$\EF$ allocation for the original instance. Furthermore, note that an $\alpha$-mult-$\EF$ guarantee, for any $\alpha >0$, implies that each agent receives an interval of nonzero value. Therefore, achieving multiplicative approximation bounds for envy under bounded heterogeneity is as hard as the general case. 

\begin{remark}
The discretization method used in Algorithm \ref{alg:boundedhetero} has been utilized in prior works as well; see \cite{branzei2017query} and \cite{lipton2004approximately}. However, the relevant insight obtained here is the difference between additive and multiplicative approximations: While one can efficiently achieve an $\varepsilon$-additive approximation under bounded heterogeneity, establishing any multiplicative guarantee is as hard as solving the problem in complete generality.
\end{remark}
\section{Conclusion and Future Work}
Algorithmically, connected envy-free cake division is a challenging and equally intriguing problem at the core of fair division. The proof of existence of envy-free cake divisions (with connected pieces) does not lend itself to efficient (approximation) algorithms and, at the same time, negative results---that rule out, say, a polynomial-time approximation scheme (PTAS)---are not known either. In this landscape, the current work improves upon the previously best-known approximation guarantee for connected envy-free cake division. We develop a computationally efficient algorithm that finds a connected cake division that is simultaneously $(1/4 + o(1))$-$\EF$ and $(1/2 - o(1))$-mult-$\EF$. 
We also show that specifically for instances with bounded heterogeneity, an $\varepsilon$-$\EF$  division can be computed in time polynomial in $n$ and $1/\varepsilon$.   

In addition to the patent problem of efficiently finding $\varepsilon$-$\EF$ connected cake divisions, developing $\varepsilon$-$\EF$ algorithms for special valuation classes (such as single-block and single-peaked valuations) is a relevant direction of future work. Inapproximability results---similar to the ones recently obtained for $\varepsilon$-consensus halving \cite{filos2020consensus}---are also interesting.

\bibliographystyle{alpha}
\bibliography{references}
\newpage
\appendix
\section{Missing Proofs from Section \ref{sec:mainsec}} \label{app:mainsec}
Here, we restate and prove Lemma \ref{lemma:cycle-elimination}.
\lemCycleElimination*
\begin{proof}
If, for given partial allocation $\mathcal{P}=(P_1,\ldots,P_n)$, the envy-graph $G_\mathcal{P}$ is already acyclic, then we directly obtain the lemma by setting $\mathcal{Q} = \mathcal{P}$. Hence, in the remainder of the proof we consider the case wherein $G_\mathcal{P}$ is cyclic. 

Write $C = i_1 \rightarrow i_2 \rightarrow \ldots \rightarrow i_k \rightarrow i_1$ to denote a cycle in $G_\mathcal{P}$. To obtain a new partial allocation $\mathcal{P'} = (P'_1, \ldots, P'_n)$, we  reassign the intervals as follows: for all agents $j$ not in the cycle (i.e., $j \notin \{i_1, i_2, \ldots, i_k\}$), set $P'_j = P_j$. Furthermore, for all the agents $i_t$ in the cycle $C$, with $1 \leq t <k $, we set $P'_{i_t} = P_{i_{t+1}}$ and $P'_{i_k} = P_{i_1}$. That is, each agent in the cycle receives the interval assigned to its successor in the cycle. This reassignment ensures that, for all agents $i \in [n]$, we have $\widehat{v}_i (P'_i) \geq \widehat{v}_i(P_i)$; recall that a directed edge $(i,j)$ is included in the graph $G_\mathcal{P}$ iff $\widehat{v}_i(P_i) < \widehat{v}_i(P_j)$. 

 We will now show that the number of edges in the envy-graph $G_{\mathcal{P'}}$ is strictly smaller than the number of edges in $G_\mathcal{P}$. Hence, repeated elimination of cycles leads to an allocation $\mathcal{Q}$ that satisfies the lemma. Note that the collection of intervals assigned in the allocation $\mathcal{P}'$ is the same as the collection of intervals in $\mathcal{P}$. Also, the out-degree of any vertex $i$ in $G_\mathcal{P}$ (or in $G_\mathcal{P'}$) is equal to the number of bundles $P_j$s (or $P'_j$s) have value (under $\widehat{v}_i$) strictly greater than $i$'s value (again, under $\widehat{v}_i$) for her own bundle. These observations imply that for all agents not in the cycle $C$, the out-degree is the same in $G_\mathcal{P}$ and $G_\mathcal{P'}$. Moreover, for all agents $i_t$ in the cycle $C$, we have $\widehat{v}_{i_t}(P'_{i_t}) > \widehat{v}_{i_t}(P_{i_t})$. Hence, the out-degree of any such agent $i_t$ in $G_\mathcal{P}$ is strictly smaller than its out-degree in $G_\mathcal{P}$. Therefore, the number of edges in $G_{\mathcal{P'}}$ is strictly smaller than the ones in $G_\mathcal{P}$. This strict reduction in the number of edges implies that after a polynomial number of cycle eliminations we obtain an allocation $\mathcal{Q}$ for which $G_\mathcal{Q}$ is acyclic and we have $\widehat{v}_i(Q_i) \geq \widehat{v}_i(P_i)$, for all agents $i \in [n]$. 
 The lemma stands proved. 
 \end{proof}

Claim \ref{clm:rwhatvs} is proved next. 

\ClaimRWHatV*
\begin{proof}
We first address the evaluation query for $\widehat{v}_i$. Given interval $[x,y] \subseteq [0,1]$, we use $\Eval_i$ to obtain the following three values: $v_i([x,y])$, $v_i([0,x])$, and $v_i([y,1])$. These three values tell us whether $[x,y]$ is a bifurcating interval for agent $i$; see Definition \ref{defn:bifurcating}. If $[x,y]$ is a bifurcating interval, then we have $\widehat{v}_i([x,y]) = 1$. Otherwise, $\widehat{v}_i([x,y]) = v_i([x,y])$.  

Now, we consider the cut query for $\widehat{v}_i$. Given a point $x \in [0, 1]$, and a value $\nu \in [0, 1]$, we identify two candidate points $y_1$ and $y_2$ and set $y \coloneqq \min\{y_1, y_2\}$ as the leftmost point that satisfies $\widehat{v}_i([x,y]) \geq \nu$. The first candidate point is defined as $y_1 \coloneqq \Cut_i(x,\nu)$, i.e., $y_1$ is the leftmost point that satisfies $v_i([x,y_1]) = \nu$. The definition of $\widehat{v}_i$ (see equation (\ref{defn:hat-v})) implies that $\widehat{v}_i([x,y_1]) \geq v_i([x,y_1)] = \nu$. Still, there could be a point $y_2$ to the left of $y_1$ such that the interval $[x,y_2]$ is bifurcating for $i$ and, hence, $\widehat{v}_i([x,y_2]) \geq \nu$. Therefore, the second candidate $y_2$ is computed by finding the smallest bifurcating interval, if one exists, starting at $x$. Towards this, we first use the query $\Eval_i(0,x)$ to ensure that $v_i([0,x]) \leq \frac{1}{2}$. If the interval $[0,x]$ is of value 
more than $1/2$, then we set $y_2 = 1$. In case $v_i([0,x]) \leq \frac{1}{2}$, we set $y_2 \coloneqq \max\left\{ \Cut_i(x, 0.25), \Cut_i(0, 0.5) \right\}$. Finally, we return the minimum of $y_1$ and $y_2$ as the answer $y$ to the cut query for $\widehat{v}_i$.  

Overall, we get that both the cut and the evaluation queries for $\widehat{v}_i$ can be answered in polynomial time. This completes the proof. 
\end{proof}

\end{document}